\newcommand\q\qty{}
\newcommand{\set}[1]{\left\{ #1 \right\}}
\newcommand{\N}{\mathbb{N}}
\newcommand{\inv}[1]{\frac{1}{#1}}
\NewDocumentCommand{\E}{e_ o }{\IfValueTF{#1}{\mathbb{E}_{#1}}{\mathbb{E}}\IfValueT{#2}{\q[#2]}}
\RenewDocumentCommand{\P}{e_ o }{\IfValueTF{#1}{\mathbb{P}_{#1}}{\mathbb{P}}\IfValueT{#2}{\q[#2]}}
\newcommand{\onethird}{\frac{1}{3}}
\newcommand{\twothirds}{\frac{2}{3}}
\newcommand{\RYES}{{R_{\mathrm{YES}}}}
\newcommand{\RNO}{{R_{\mathrm{NO}}}}
\newcommand{\LYES}{{L_{\mathrm{YES}}}}
\newcommand{\LNO}{{L_{\mathrm{NO}}}}
\newcommand{\defsf}[1]{\csdef{#1}{\mathsf{#1}}}
\newcommand{\BPdL}{\mathsf{BP{\boldsymbol{\cdot}}L}}
\newcommand{\prBPdL}{\mathsf{prBP{\boldsymbol{\cdot}}L}}
\newcommand{\I}{\mathcal{I}}
\newcommand{\U}{\mathcal{U}}
\title{Leakage-Resilient Hardness Equivalence to Logspace Derandomization}
\author{Yakov Shalunov}
{University of Chicago\footnote{Research conducted as a student at the California Institute of Technology.}}
{yasha@uchicago.edu}
{}
{}
\authorrunning{Yakov Shalunov}
\keywords{Derandomization, logspace computation, leakage-resilient hardness, psuedorandom generators}
\begin{document}

\maketitle

\DeclareDocumentCommand{\binstr}{e^}{\set{0,1}^{\IfValueTF{#1}{#1}{*}}}

\begin{abstract}
    Efficient derandomization has long been a goal in complexity theory, and a major recent result by Yanyi Liu and Rafael Pass identifies a new class of hardness assumption under which it is possible to perform time-bounded derandomization efficiently: that of ``leakage-resilient hardness.'' They identify a specific form of this assumption which is \emph{equivalent} to $\prP = \prBPP$.

    In this paper, we pursue an equivalence to derandomization of $\prBPdL$ (logspace promise problems with two-way randomness) through techniques analogous to Liu and Pass.

    We are able to obtain an equivalence between a similar ``leakage-resilient hardness'' assumption and a slightly stronger statement than derandomization of $\prBPdL$, that of finding ``non-no'' instances of ``promise search problems.''
\end{abstract}

\section{Introduction}
In a time-bounded setting, pseudorandom generators have historically been used to show that certain circuit-complexity lower bounds imply that $\mathsf{P}=\BPP$ (in particular, problems in $\mathsf{E}$ which require exponential size circuits ~\cite{impagliazzo_1997}). Another key result showed that $\mathsf{P}=\BPP$ (and specifically, a deterministic polynomial-time algorithm for polynomial identity testing) implies that either $\mathsf{NEXP}$ requires super-polynomial boolean circuits or computing permanents requires super-polynomial arithmetic circuits~\cite{kabanets_2004}. However, historically the lower bounds known to imply $\mathsf{P}=\BPP$ and those implied by $\mathsf{P}=\BPP$ have not matched. Seeing these results, one might hope to find a hardness assumption $H$ such that
\begin{center}
    ``\textit{$\mathsf{P} = \mathsf{BPP}$ if and only if there exists a problem satisfying $H$}.''
\end{center}

A recent exciting paper, ``Leakage-Resilient Hardness v.s. Randomness''~\cite{liu_2022} by Liu and Pass, has done exactly this, building on prior work by Chen and Tell~\cite{chen_2021}; Nisan and Wigderson~\cite{nisan_1994}; Impagliazzo and Wigderson~\cite{impagliazzo_1997}; Sudan, Trevisan, and Vadhan~\cite{sudan_2001}; and Goldreich~\cite{goldreich_2011}. Liu and Pass found an alternate form of hardness assumption which they were able to formulate as equivalent to $\mathsf{prP}=\mathsf{prBPP}$ using a cryptographic notion of ``leakage-resilient hard functions''~\cite{rivest_1985}—i.e., hard functions which are still uniformly hard even if you have some ``leaked'' information about the output (formalized as a ``leakage function'' of bounded output length taking the hard function's output as an input).

In the space-bounded derandomization setting, there's been an enormous effort to prove unconditionally that $\L=\mathsf{BPL}$, and while it's believed that there's no obstruction to being able to do so, this goal continues to remain elusive. However, it is also known that it is possible to get a conditional derandomization in logarithmic space with PRG machinery similar to the time-bounded randomization setting~\cite{klivans_2002}.

\subsection{Overview}

In the vein of conditional derandomization, this paper constructs a result similar to Liu and Pass for logarithmic space-bounded computation. We obtain an equivalence utilizing a space-bounded analog to the time-bounded assumption in Liu and Pass's paper.

Liu and Pass's ideas do not carry directly to the space-bounded setting because certain manipulations required by the proof cannot be carried out in logspace. In particular:

\subparagraph*{Worst-case vs. average-case hardness.}
A key step in the Liu and Pass proof that ``hardness implies derandomization'' uses error-correcting codes to convert worst-case hardness to average-case hardness. In the space-bounded setting, we do not have access to error-correcting codes capable of list-decoding the available $\frac{1}{2} + \mathcal{O}\left(\inv{n}\right)$ fraction of correct bits. As a consequence, our version of leakage-resilient hardness is average case instead of worst case. Using a stronger hardness assumption then required additional work in the converse direction.

\subparagraph*{Average-case hardness.}
When analyzing the constructed Nisan-Wigderson generator in the standard way, Liu and Pass use rejection sampling to find a prefix fixing for Yao's distingiusher-to-predictor transformation with a sublinear number of trials. However, in the space-bounded setting, we cannot write down a single fixing (since the size of the fixing is superlogarithmic), so our hardness assumption must be strengthened until even a linearly small chance of having the correct fixing contradicts the hardness assumption. This version then allows us to sample once and write it directly to the output tape without verification.

\subparagraph*{Search problem reduction.}
When Liu and Pass prove that $\mathsf{prP}=\mathsf{prBPP}$ implies their hardness assumption, they rely on a prior result by Goldreich to reduce the task of finding a non-no solution for a $\mathsf{prBPP}$ search problem to a polynomial number of $\mathsf{prBPP}$ decision problems~\cite{goldreich_2011}, which can then be solved deterministically by assumption. In the space-bounded setting, we've found no equivalent transformation and, as a result, our main theorem works directly with the non-standard derandomization of search problems (which in polynomial time is equivalent to $\mathsf{prP} = \mathsf{prBPP}$). Specifically, we use a ``one-sided search'' which, given a ``promise search problem,'' finds a solution which is not a no-instance (these terms are more precisely defined in the following section).

Nevertheless, we are able to obtain:
\begin{theorem}[Main Theorem (informal)]
    There is a logspace-computable function which is leakage-resilient hard if and only if it is possible, in logspace, to deterministically solve a one-sided search for any $\prBPdL$ search problem.
\end{theorem}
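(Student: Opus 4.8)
The plan is to establish the two directions of the equivalence separately, in each case adapting the corresponding argument of Liu and Pass~\cite{liu_2022} to the space-bounded setting and absorbing the three difficulties flagged in the overview into the definitions: the notion of leakage-resilient hardness should be \emph{average-case} (logspace list-decoding of a $\frac12+\mathcal{O}(\inv n)$ fraction of correct bits is unavailable), should be \emph{single-shot} in the sense that even a $\Omega(\inv{\poly})$ chance of guessing the correct distinguisher-to-predictor fixing already refutes it (a superlogarithmic fixing cannot be stored, let alone found by rejection sampling), and the target of derandomization should be the one-sided search formulation directly (no logspace analogue of Goldreich's search-to-decision reduction is available).

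\emph{Hardness implies one-sided search.} Starting from a logspace-computable function $f$ that is (average-case, single-shot) leakage-resilient hard, I would build a Nisan--Wigderson generator $G^f$ on top of a family of combinatorial designs that is itself logspace-constructible, so that $G^f$ stretches $\mathcal{O}(\log n)$ seed bits to a polynomial number of output bits and is computable in logspace with oracle calls to $f$. Correctness is the usual hybrid/reconstruction argument: a $\prBPdL$ distinguisher for $G^f$ yields, via Yao's distinguisher-to-predictor transformation, a next-bit predictor, and hardwiring the design structure turns it into a logspace algorithm that computes $f$ from a short advice string; the point is that this advice is precisely a $\leak$-style leakage of $f$ (its values on the overlapping design coordinates), and that the single-shot strengthening lets the predictor guess the one needed prefix-fixing, write it to its work tape, and continue without verification, so that a $\Omega(\inv{\poly})$ success probability already contradicts the hardness of $f$. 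Having obtained a logspace pseudorandom generator fooling $\prBPdL$, to perform a one-sided search for a $\prBPdL$ search problem with randomized logspace solver $M$ and no-set $\RNO$, I would enumerate all polynomially many seeds $s$, run $M$ on $G^f(s)$, and output the first answer the (logspace-, or $\prBPdL$-recognizable) no-predicate fails to flag — such a seed exists because the event $M(x,G^f(s))\in\RNO(x)$ is recognized within the class $G^f$ fools and has probability at most $\onethird$ under truly uniform randomness.

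\emph{One-sided search implies hardness.} Here I would exhibit a single $\prBPdL$ search problem whose non-no solutions are, by construction, truth tables of leakage-resilient hard functions, and then let the hard function be the one computed by the assumed deterministic one-sided-search algorithm on that problem (so it is logspace-computable). The randomized solver outputs a uniformly random truth table of the appropriate polynomial length; a union bound over (descriptions of bounded-space adversaries) $\times$ (leakage values) $\times$ (error patterns within relative distance $\frac12-\Omega(\inv{\poly})$) shows that a random truth table is leakage-resilient hard with overwhelming probability, so the promise is met. Conversely, any truth table that is \emph{not} leakage-resilient hard admits a short description — the predicting adversary, the leaked bits, and the list of mispredicted points — so one places a $\prBPdL$-recognizable superset of these ``easy'' truth tables into $\RNO$; a non-no output is then genuinely hard. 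The ``for all large $n$'' / infinitely-often quantifiers and the uniformity of the adversary enumeration are handled in the standard way by a padding argument against the $i$-th machine.

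\emph{Main obstacle.} I expect the crux to be making one definition of leakage-resilient hardness do both jobs at once: strong enough that a logspace predictor holding only the leaked advice, guessing its fixing in a single shot and never verifying it, refutes it (this is where space-boundedness genuinely bites in the first direction, since neither the fixing nor the reconstruction advice can be stored outright), yet weak enough that ``a random truth table is hard'' can be certified by an honestly $\prBPdL$-recognizable no-set in the second direction. The naive time-bounded certificate — Kolmogorov-style incompressibility — is not logspace-decidable, since one cannot enumerate the exponentially many candidate short descriptions; circumventing this while keeping the compression argument valid, and simultaneously tuning the leakage length against the Nisan--Wigderson design parameters so that the reconstruction advice fits a legitimate leakage bound, is where most of the work and bookkeeping will lie.
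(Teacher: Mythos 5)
Your proposal follows essentially the same route as the paper in both directions: the forward direction is the targeted Nisan--Wigderson construction over logspace-computable designs, with the leakage function supplying the reconstruction tables and a single unverified random choice of the hybrid fixing, followed by seed enumeration against the derandomized verifier; the backward direction is the per-input search problem whose finder outputs a uniformly random string, with the same union bound over bounded-description adversaries, leakage values, and Hamming balls. The obstacle you flag at the end is resolved in the paper not by incompressibility but by the promise gap (adversary success probability $\geq \inv{n}$ for no-instances versus $< \inv{2n}$ for yes-instances), which lets the verifier simply simulate the $O(n^2)$ adversary pairs of description length at most $\log n$ polynomially many times and estimate their success probabilities by sampling --- the composition $A'(x,\leak'(x,r))$ being exactly where the two-way random tape is needed.
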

This shows a meaningful equivalence between a hardness assumption and a derandomization of a search version of $\prBPdL$. Notably, the derandomization necessary follows from the existence of a standard pseudorandom generator construction powerful enough to derandomize $\prBPdL$.

Shortly after the original release of this paper as a preprint, Pass and Renard released their own, independent result ``Characterizing the Power of (Persistent) Randomness in Log-space''~\cite{pass_2023} which presents a theorem very similar to our own---their result is a full derandomization of $\prBPdL$ search problems under a roughly equivalent condition. Their techniques in both directions are analogous to ours and their result is fundamentally similar. 

In particular, despite presenting their result differently, Pass and Renard's work demonstrates the same one-sided search as ours. While their derandomization direction (claims 5.3, 4.12, 4.9, and 4.10 in~\cite{pass_2023}) relies on a non-promise version of $\BPdL$ search problems (where non-no instances are yes instances), their hardness direction relies on promise search like ours. (Notably, Goldreich's reduction in polynomial time also produces non-no instances rather than yes instances, suggesting this is a more fundamental limit of this formulation of search problems.) Unfortunately, this produces a mismatch that makes their results not quite correct-as-written.

Pass and Renard's paper further contributes a ``partial derandomization'' analogous to Liu and Pass's low-end regime (i.e. for positive $\gamma$, containment of $\mathsf{search}\BPdL$ in $\mathsf{searchL}^{1 + \gamma}$ is equivalent to existence of $\mathcal{O}(\log^{1 + \gamma}n)$ computable hard $f$). Though we believe it contains the same mismatch of directions, after some simple reformatting into a one-sided derandomization, it will be an elegant extension to the equivalence.

We believe our proof of the key equivalence is more direct, briefer, and clearer as it, in particular, achieves equivalence without the use of Doron and Tell's~\cite{doron_2023} recent results in logspace PRGs (with no significant increase in complexity) by using a better suited version of leakage resilient hardness. Their result does, however, show that our versions of leakage resilient hardness are equivalent to each other.

Because our result follows a similar path to Liu and Pass's, in section 2 we provide an accessible summary and informal description of the key ideas in their result.

In section 3, we provide a formal statement of our result and the definitions necessary for it.

In sections 4 and 5, we prove the forward and reverse directions of the claim respectively, and in section 6 we discuss potential further work.

\section{Informal description of Liu and Pass's polynomial-time equivalence result}
In this section, we provide a high-level summary of Liu and Pass's proof, which also serves as an outline for our own work.

Since it is central to the proof, we start with the definition of ``leakage-resilient hardness,'' which is a uniform hardness assumption coined by Rivest and Shamir in 1985~\cite{rivest_1985} and modified by Liu and Pass based on Chen and Tell's~\cite{chen_2021} work with almost-all input hardness. 

In the proof, leakage is used to uniformly substitute for non-uniform hard-coded bits in the proof of correctness of the Nisan-Wigderson generator, which we expand on later in the section.

\begin{definition}[Almost-all input leakage-resilient hardness~\cite{liu_2022}\cite{rivest_1985}\cite{chen_2021}]
    Let $f: \binstr^n \to \binstr^n$ be a function. We say that $f$ is \emph{almost-all-input $(T, \ell)$-leakage-resilient hard} if for all $T$-time probabilistic algorithms $(\leak, A)$ satisfying $|\leak(x, f(x))| \leq \ell(|x|)$, for all sufficiently long strings $x$, $A(x, \leak(x, f(x))) \neq f(x)$ with probability at least  $\frac{2}{3}$.
\end{definition}

To illustrate this property, consider when $f$ is the classical example of a (potentially) hard function used in cryptography: prime factorization of a product, $x$, of two primes. Given $f(x)$ (a correct factorization of $x$), a leakage function $\leak(x, f(x))$ which picks the smaller prime number from $f(x)$ and leaks it would make prime factorization ``easy'' as it would allow the attacker $A$ to simply be a division algorithm. Since both division and identifying the smaller factor can be done in polynomial (say, $n^k$) time, and the smaller factor must be at most $n/2$ bits, prime factorization is \emph{not} $(n^k, n/2)$-leakage-resilient hard---even if, without leakage, it requires superpolynomial time.

The simplest expression of Liu and Pass's result as relevant to ours is the following:
\begin{theorem}[Liu and Pass~\cite{liu_2022}]
    There exists a constant $c$ such that for all $\varepsilon \in (0, 1)$, the following are equivalent:
    \begin{itemize}
        \item There exists a function $f: \binstr^n \to \binstr^n$, computable in deterministic polynomial time, which is almost-all input $(n^c, n^\varepsilon)$-leakage-resilient hard; and
        \item $\mathsf{prP} = \mathsf{prBPP}$.
    \end{itemize}
\end{theorem}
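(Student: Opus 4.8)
The plan is to prove the two directions of the equivalence separately, following the template of the Nisan--Wigderson hardness-vs-randomness paradigm adapted to the uniform, leakage-resilient setting.

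\textbf{Hardness implies derandomization.} Suppose $f$ is almost-all-input $(n^c,n^\varepsilon)$-leakage-resilient hard and computable in deterministic polynomial time. To show $\prP = \prBPP$ it suffices to derandomize the canonical $\prBPP$-complete problem (estimating the acceptance probability of a circuit, or equivalently the $\mathsf{CAPP}$ problem). First I would apply an error-correcting code to $f$ to pass from worst-case-style hardness to average-case hardness: writing $\hat f$ for an encoding that is locally list-decodable from a $\tfrac12 + \tfrac{1}{\poly(n)}$ fraction of agreement, the leakage-resilient hardness of $f$ transfers to a leakage-resilient \emph{unpredictability} of the bits of $\hat f$. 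Then I would plug $\hat f$ into the Nisan--Wigderson generator (on the appropriate polylog- or poly-size design). The crux is the analysis: if some probabilistic polynomial-time distinguisher $D$ breaks this PRG, one runs Yao's distinguisher-to-predictor transformation. The point where non-uniformity would normally enter --- the need to hard-wire an optimal prefix of the seed and the values of $\hat f$ on the ``other'' blocks of the design --- is instead supplied by the \emph{leakage function}: the leaked string encodes exactly those few bits (their total length is $\poly$ in the seed length, which is $n^{O(\varepsilon)}$, hence at most $\ell(n) = n^\varepsilon$ after tuning parameters), and the attacker $A$ simulates the predictor using them. This contradicts leakage-resilient hardness, so the PRG fools $D$, and enumerating its seeds deterministically in polynomial time derandomizes $\mathsf{CAPP}$.

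\textbf{Derandomization implies hardness.} Conversely, assume $\prP = \prBPP$. I would exhibit an explicit polynomial-time-computable $f$ that is leakage-resilient hard by a direct diagonalization/win-win argument. Consider the natural hard candidate: on input $x$ of length $n$, interpret (part of) $x$ as the description of a pair $(\leak, A)$ of $n^c$-time machines together with an index, and define $f(x)$ to diagonalize against the leakage-attacker pair --- i.e. set $f(x)$ so that, for the pair encoded, $A(x,\leak(x,f(x)))$ fails. Doing this naively is circular because $\leak$ sees $f(x)$; the resolution is the standard one: the number of possible leakage strings is only $2^{\ell(n)}$, so there are at most $2^{\ell}$ candidate advice values $A$ could receive, and we only need to pick $f(x)$ avoiding the (at most $2^{\ell}$, each of measure-zero) ``bad'' values; since $f(x)$ ranges over $2^n \gg 2^\ell$ strings this is possible. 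The task of actually finding such a good value in deterministic polynomial time is where the derandomization hypothesis is used: finding a non-bad output is a $\prBPP$-style search task (one can sample a random candidate and the hypothesis lets us verify/search deterministically, via Goldreich's reduction of $\prBPP$ search to a polynomial number of $\prBPP$ decision problems, each solvable deterministically by $\prP=\prBPP$).

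\textbf{Main obstacle.} I expect the delicate point to be the parameter bookkeeping in the forward direction: one must simultaneously (i) choose the NW design and the code so that the seed length, and hence the amount of information that must be leaked, stays below $n^\varepsilon$ for the given $\varepsilon$; (ii) ensure the fixed constant $c$ works uniformly for \emph{all} $\varepsilon \in (0,1)$, which forces the reduction's running time (encoding $f$, computing the design, simulating the predictor and the code's decoder) to be bounded by a single polynomial $n^c$ independent of $\varepsilon$; and (iii) handle the ``almost-all-input'' quantifier correctly --- the distinguisher only needs to be fooled on infinitely many input lengths, so the contradiction with leakage-resilient hardness must be arranged to hold for all sufficiently long $x$, which requires a careful padding argument. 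The reverse direction's subtlety is more contained: making the diagonalization output land in $\binstr^n$ (same output length as input length, as the definition demands) while still leaving enough room to dodge all $2^{\ell(n)}$ bad values, and making Goldreich's search-to-decision reduction go through in this specific setting.
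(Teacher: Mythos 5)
Your skeleton matches the Liu--Pass argument as the paper summarizes it in Section 2: a Nisan--Wigderson-style PRG in which the leakage function uniformly supplies the advice that the classical proof hard-wires, in one direction; and in the other, a ``random output is hard'' counting argument packaged as a $\prBPP$ search problem and resolved by Goldreich's search-to-decision reduction plus $\prP = \prBPP$. But two specific ideas are missing or wrong. In the forward direction, you correctly identify the almost-all-input quantifier as the delicate point, but ``a careful padding argument'' is not the resolution --- the resolution is that the PRG is \emph{targeted}: the truth table fed to the NW construction is $\hat f$ evaluated on (a padded version of) the problem instance $x$ itself, and the distinguisher also receives $x$ as input. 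This is precisely what converts ``derandomization fails on infinitely many instances $x$'' into ``a single uniform pair $(A, \leak)$ predicts $f$ on infinitely many inputs,'' contradicting almost-all-input hardness; with an untargeted PRG built from $f$ of some canonical input per length, identifying \emph{which} instance derandomization fails on is non-uniform information and the contradiction cannot be assembled uniformly. Your write-up plugs $\hat f$ into NW without saying what input $f$ is evaluated on, so as written this step does not close.

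In the backward direction, your opening move --- encoding a single pair $(\leak, A)$ into $x$ and diagonalizing against only that pair --- cannot work: the definition demands that \emph{every} bounded pair fails on \emph{all sufficiently long} $x$, so defeating only the pair encoded in $x$ leaves any fixed adversary unbeaten on almost every input. The actual construction requires $f(x)$ to simultaneously defeat all pairs of description length at most $\log n$ (only polynomially many; excluding longer descriptions affects each fixed machine on only finitely many lengths, which the almost-all-input quantifier absorbs), and a union bound over these pairs and over the at most $2^{\ell}$ possible leakage values shows a uniformly random $y$ is a yes-instance with high probability --- that is the finder. Your closing sentences do arrive at this random-candidate-plus-Goldreich formulation, so the recovery is essentially correct, but the diagonalization framing should be discarded rather than repaired.
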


\subsection{Forward direction (Liu and Pass)}
Liu and Pass's forward direction---hardness assumption to $\mathsf{prP} = \mathsf{prBPP}$---adapts the original proof that the Nisan-Wigderson pseudorandom generator (NW PRG) yields derandomization, with a few key differences. The generator they construct (invented by Goldreich~\cite{goldreich_2011}) is ``targeted,'' meaning both it and the distinguishers it beats get access to a ``target'' string as an input in addition to the seed. When performing derandomization using the generator, the appropriately padded input becomes the target. (The utility of this is explained at the end of this subsection.)

Their construction of a NW-like PRG uses the error-correcting encoded output of the leakage-resilient hard function $f$ on the ``target'' $x$ instead of the standard hard-coded truth table of a non-uniformly hard function. (In our proof, the error-correcting encoding step is replaced with average-case hardness.)

Recall that in the standard proof of correctness for a Nisan-Wigderson pseudorandom generator, when converting a distinguisher to a predictor, there is a step where a prefix is fixed and a small table of a subset of outputs from the hard truth-table is hard-coded. The key difference here is that, in order to obtain a contradiction, instead of hard-coding this prefix and table non-uniformly, it is instead computed by the $\leak$ function, which has access to the hard truth-table (since it is efficiently computed from the output of $f$, which $\leak$ has access to). The prefixes are found by rejection sampling to pick a ``good one.'' (In our proof, we select one randomly.) This output can be made polynomially small compared to $n$ to satisfy the length bound $\ell$ on $\leak$. 

The final step of the proof is to argue that if there is a problem in $\prBPP$ that cannot be derandomized by this PRG, we can turn this into a distinguisher by fixing the problem. Observe, however, that for a contradiction, we need derandomization to fail in infinitely many cases (since the hardness assumption is \emph{almost}-all input). Furthermore, we are only fooling \emph{uniform} distinguishers. This is where the targeting of the PRG is necessary.

By using (a padded version of) the input to the decision problem, $x$, as the target, we ensure that we can uniformly ``find'' the infinitely many values of $x$ where the distinguisher succeeds. If the PRG were untargeted, we would have no way to uniformly identify which value of $x$ the PRG fails on for a given length, so failing to derandomize some decision problem would not guarantee a distinguisher.

(In our proof, we additionally perform a one-sided search derandomization. The search problem's search algorithm (the ``finder'') is run a with pseudorandom string generated for each possible seed until one is found which the verifier accepts. We show that such a seed must exist and that this process produces the desired derandomization. In the polynomial time regime, this one-sided search derandomization follows from decision derandomization directly without further access to a PRG and so happens on the converse side.)

\subsection{Backward direction (Liu and Pass)}
The key idea of the backward direction---$\mathsf{prP} = \mathsf{prBPP}$ to hardness assumption---is that, if we were able to assign a ``random'' string to each input, we could build such an $f$ since leaking some information about that random string would not help you reconstruct the rest of the output.

Liu and Pass make use of a result by Goldreich~\cite{goldreich_2011} which states that finding non-no solutions to a $\prBPP$ search problem can be reduced to a polynomial number of $\prBPP$ decision problems. Liu and Pass then formulate finding a ``good output'' $f(x)$ for any given input $x$ as a $\prBPP$ search problem (defined below). They then apply Goldreich's result and then the derandomization assumption to create a hard function whose output is deterministic but behaves sufficiently like a ``random string'' would.

In Goldreich's formulation, a $\prBPP$ search problem is defined as follows: 

\begin{definition}[$\prBPP$ search problems~\cite{goldreich_2011}]
$\RYES, \RNO \subseteq \binstr \times \binstr$ where $\RYES \cap \RNO = \varnothing$ represent a $\prBPP$ search problem if both of the following hold:
\begin{itemize}
    \item $\RYES$ and $\RNO$ represent a $\prBPP$ decision problem. I.e., there is a PPT machine $V$ (the ``verifier'') such that
    $$\forall (x,y) \in \RYES,~\P[V(x,y) = 1] \geq \twothirds \qand \forall (x,y) \in \RNO,~\P[V(x,y) = 0] \geq \twothirds$$
    \item There exists a PPT machine $F$ (the ``finder'') such that for all $x$, if $\exists y, (x,y) \in \RYES$, then
    $$\P[(x,F(x)) \in \RYES] \geq \twothirds$$
\end{itemize}

Let $S_R = \set{x : \exists y, (x, y) \in \RYES}$ be the set of inputs for which a solution exists.
\end{definition}

Liu and Pass formulate computing $f$ as finding solutions to a search problem. A given pair $(x, y)$ is a yes-instance of the problem if every pair $(A, \leak)$ of attacker and leakage functions whose program descriptions have length at most $\log n$ and which run in the desired time bound has $A(x, \leak(x, y)) = y$ with probability at most $1/6$. Similarly, $(x, y)$ is a no-instance if there exists $(A, \leak)$ such that $A(x, \leak(x, y)) = y$ with probability at least $1/3$. (The gap between yes- and no-instance probabilities is allowed by it being a \emph{promise} search problem and is used by the verifier.)

Only attacker pairs bounded to length $2 \log n$ need to be considered because this only excludes \emph{any particular} machine from consideration for a finite number of inputs $x$.

When allowed to use randomness, the finder is simple---as established, simply ignoring $x$ and picking a random string $y$ works with high probability. The verifier can check whether a yes-or-no instance is a yes-instance by simply iterating over all attackers and leakage functions up to length $\log n$ (truncating their run time to the required bound) and checking whether any of them compute $y$ with probability greater than $1/4$. This runs in polynomial time since there are $n$ machines each truncated to a polynomial time bound and each pair only needs to be run polynomially many times to be correct with high probability.

The desired hard function can then be computed by using Goldreich's reduction to reduce the search problem to a polynomial number of $\prBPP$ problems. These are then derandomized to deterministically produce a non-no instance. The $y$-component of the resulting non-no instance is then exactly the desired hard output, by the construction of the search problem.

(In our case, we need access to the PRG rather than merely $\prL = \prBPdL$ to perform the one-sided search, so it is performed when proving the forward direction.)

\section{Results}

\begin{theorem}[Main theorem]
    \label{thm:main}
    There exist parameters $c_1, c_2, \alpha > 0$ (in particular, $\alpha = \onethird$ works) such that the following are equivalent:
    \begin{enumerate}
        \item There exists a logspace function $f$ which is almost-all-input $(n^{c_1}, c_2 \log n, \ell, \delta)$-leakage-resilient average hard where $\ell = n^\varepsilon$ (for $\varepsilon = 2\alpha + \frac{\alpha^3}{5}$) and $\delta = \q(\frac{1}{2} - \inv{m^2})n$ (for $m = n^{\alpha^3/5}$).
        \item For any $\prBPdL$ search problem $\RYES, \RNO$, we can create a deterministic algorithm $F'$ such that for any $x \in S_R$, we have that $(x,F'(x)) \notin \RNO$. (In particular, this implies $\prBPdL = \prL$.)
    \end{enumerate}
\end{theorem}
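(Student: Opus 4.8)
The plan is to prove the two directions of \Cref{thm:main} separately, following the skeleton of Liu and Pass but replacing the steps that fail in logspace, as outlined in the overview.

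\textbf{Forward direction (hardness $\Rightarrow$ one-sided search derandomization).}
First I would use the average-case leakage-resilient hard $f$ to instantiate a targeted Nisan--Wigderson-style generator $G$, taking the target to be a suitably padded copy of the search instance $x$ and using the truth table of $f(x)$ directly (no error-correcting encoding, since $f$ is already average-case hard). The core is the standard distinguisher-to-predictor argument: if some logspace distinguisher $D$ breaks $G$ on infinitely many targets, Yao's transformation yields a next-bit predictor, and the prefix fixing plus the small table of hard-function values that the predictor needs are supplied by the $\leak$ function rather than hard-coded non-uniformly --- this is exactly why the leakage output length bound $\ell = n^\varepsilon$ appears, since the table of NW-subset values must fit within $\ell$ bits. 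The delicate point specific to the logspace setting is that we cannot afford to write down or verify a ``good'' prefix fixing (it is superlogarithmic), so instead of rejection-sampling a good fixing as Liu and Pass do, we sample a single random fixing and argue that the hardness assumption must be strong enough that even a $1/\poly$ chance of the fixing being good contradicts average-case hardness; this is the reason $\delta$ is pushed to $(\frac12 - 1/m^2)n$ and the advantage parameter $m = n^{\alpha^3/5}$ is chosen as it is. Having established that $G$ fools logspace distinguishers with target $x$, I would then perform the one-sided search: given a $\prBPdL$ search problem with finder $F$ and verifier $V$, define $F'(x)$ to enumerate all seeds $s$, run $F$ on $x$ with randomness $G(x,s)$ to get a candidate $y_s$, run $V(x,y_s)$ with randomness again drawn from $G$, and output the first $y_s$ that $V$ accepts. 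The correctness argument is that if $x \in S_R$, the pair (finder-then-verifier composed) is a logspace test that the true randomness passes with probability $\ge \twothirds$ over some region, hence $G$ must produce a passing seed; and if $(x, y_s) \in \RNO$ then $V$ rejects it with probability $\ge \twothirds$ even under pseudorandomness, so the output is never a no-instance. All of $F'$ --- enumerating seeds, running $G$, $F$, $V$ --- is composable in logspace because each is logspace and we reuse space across seeds.

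\textbf{Backward direction (one-sided search derandomization $\Rightarrow$ hardness).}
Here I would formulate the task of producing a good output $f(x)$ as a $\prBPdL$ search problem: a pair $(x,y)$ is a yes-instance if no $(A,\leak)$ with description length $\le c_2\log n$ and running in time $n^{c_1}$, with $|\leak(x,y)| \le \ell(|x|)$, has $A(x,\leak(x,y)) = y$ with probability exceeding some low threshold, and a no-instance if some such $(A,\leak)$ succeeds with probability above a higher threshold --- the promise gap being what the verifier exploits. The verifier $V$ enumerates all $\le n^{c_2}$ machine pairs, truncates them to the time bound, estimates their success probability by sampling, and accepts iff none crosses the threshold; this is a $\prBPdL$ computation because the enumeration index and a truncated simulation each fit in logspace and we reuse space across machines. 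The finder $F$ ignores $x$ and outputs a uniformly random $y$; a counting/union-bound argument over the (few, short) machine pairs shows that a random $y$ is a yes-instance with probability $\ge \twothirds$, so the search problem is genuinely a $\prBPdL$ search problem. Since there is no logspace analogue of Goldreich's search-to-decision reduction, we invoke the one-sided search derandomization hypothesis directly: it gives a deterministic logspace $F'$ with $(x,F'(x)) \notin \RNO$ for all $x \in S_R$, and setting $f(x) := F'(x)$ (after checking $S_R$ contains all relevant $x$, which the random finder guarantees), the fact that $f(x)$ is not a no-instance says precisely that no short bounded-time $(A,\leak)$ recovers $f(x)$ with non-trivial probability --- i.e. $f$ is almost-all-input leakage-resilient average hard with the stated parameters. ``Almost-all'' rather than ``all'' inputs, and the description-length bound $c_2 \log n$, come from the fact that any fixed machine is excluded from consideration for only finitely many input lengths.

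\textbf{Main obstacle.}
I expect the crux to be the forward direction's handling of the prefix fixing without verification: making the parameter bookkeeping work so that ``a random fixing is good with probability $1/\poly$'' actually contradicts $(n^{c_1}, c_2\log n, \ell, \delta)$-leakage-resilient average hardness for the specific $\delta = (\frac12 - 1/m^2)n$ and $\ell = n^{2\alpha + \alpha^3/5}$, while simultaneously keeping every component --- the generator evaluation, the seed enumeration in the one-sided search, and the composed finder/verifier simulations --- inside logarithmic space. The NW seed length, the number of NW-design sets, the table size leaked, the advantage $1/m$, and the hardness fraction $\delta$ are all coupled, and choosing $\alpha = \onethird$ must be shown to close the loop; verifying this interlocking choice of constants, rather than any single conceptual step, is where the real work lies.
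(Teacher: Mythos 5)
Your proposal follows essentially the same route as the paper: the forward direction instantiates a targeted NW-style generator from $f$ with the leakage function supplying the predictor's tables and a single unverified random prefix fixing (contradicting average-case hardness at the $1/\poly$ level), then enumerates seeds to perform the one-sided search; the backward direction formulates computing $f$ as a $\prBPdL$ search problem with a random-string finder, an enumerate-and-sample verifier, and a Hamming-ball counting/union-bound argument, exactly as in the paper. The one point to tighten is that the verifier $V$ used inside $F'$ must be fully derandomized first (majority over all seeds, via the $\prBPdL = \prL$ consequence of the PRG) before filtering candidates, since a single pseudorandom run of $V$ could still accept a no-instance with probability up to $\onethird$.
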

\begin{proof}
    The forward direction is theorem \ref{thm:existence-to-derandomization} and the reverse direction is theorem \ref{thm:derandomization-to-existence}.
\end{proof}

\subsection{Preliminaries}
Note that we are working with a read-\emph{only} random tape with two-way movement. For decision problems, this definition corresponds to the complexity class $\BPdL$, though we work with its promise and search promise counterparts.

Our result is formulated in terms of ``search problems,'' analogous to the $\prBPP$ search problems defined by Goldreich in~\cite{goldreich_2011}.

\begin{definition}[$\prBPdL$ search problems]
A pair of sets $\RYES, \RNO \subseteq \binstr \times \binstr$ where $\RYES \cap \RNO = \varnothing$  represent a $\prBPdL$ search problem if both of the following hold
\begin{itemize}
    \item $\RYES$ and $\RNO$ represent a $\prBPdL$ decision problem. I.e., there is a PPT logspace machine $V$ (the ``verifier'') such that
    $$\forall (x,y) \in \RYES,~\P[V(x,y) = 1] \geq \twothirds \qand \forall (x,y) \in \RNO,~\P[V(x,y) = 0] \geq \twothirds$$
    \item There exists a PPT logspace machine $F$ (the ``finder'') such that for all $x$, if $\exists y, (x,y) \in \RYES$, then
    $$\P[(x,F(x)) \in \RYES] \geq \twothirds$$
\end{itemize}

Let $S_R = \set{x : \exists y, (x, y) \in \RYES}$ be the set of inputs for which a solution exists.
\end{definition}

This definition is identical to the polynomial variant with the added constraint that the machines are bounded to logarithmic space in addition to the polynomial time bound.

Our hardness assumption is a modification of leakage resilient hardness as defined in definition 2.1 of~\cite{liu_2022}, which in turn builds on~\cite{rivest_1985}.

\begin{definition}[Almost-all-input leakage resilient average hard]
    \label{def:leakage-average-hard}
    Let $f: \binstr^n \to \binstr^n$ be a multi-output function. $f$ is \emph{almost-all-input $(T, S, \ell, d)$-leakage resilient average hard} if for all $S$-space $T$-time probabilistic algorithms $(A, \leak)$ satisfying $|\leak(x, f(x))| \leq \ell(n)$, we have that for all sufficiently long strings $x$, 
    
    $$\P[d_H(A(x, \leak(x, f(x))), f(x)) < d(n)] < \inv{n}$$

    with probability over the internal randomness of the algorithms, where $d_H$ is Hamming distance.
\end{definition}

Note that this is ``stronger'' than a direct analog to the time-bounded assumption~\cite{liu_2022} in two ways: first, and most importantly, we've added another parameter which parameterizes how ``good of an approximation'' is allowed, whereas in the time-bounded setting, forbidding strict equality was sufficient. Second, the allowed probability of ``success'' is $\inv{n}$ rather than $\inv{3}$.

The first difference converts the worst-case hardness to average-case hardness. It is necessary because in the time-bounded setting, we have access to much more powerful error-correcting codes which are not available in the logspace setting. In the time-bounded setting, error-correcting codes let us convert a small edge in breaking the generator to an exact computation of the hard function. Meanwhile, in the logspace setting, we're limited to just that small edge, so we bake it into the hardness assumption instead.

The second difference exists to handle issues in the distinguisher to predictor transformation. In polynomial time, the attacker functions can ``try'' $o(1/n)$ strings and figure out which one is the ideal choice, but because the strings are super-logarithmic in length, we cannot verify them in logspace. Thus, the hardness assumption is strengthened to allow simply choosing a random one without verification.

While much of the high-level structure of the following proof is similar to that presented in section 2, we have that:
\begin{itemize}
    \item in the ``hard function implies derandomization'' direction, we show the desired derandomization of search problems; and
    \item in the ``derandomization implies hard function'' direction, the search problem is constructed such that anything which is not a no-instance is a valid value for the function to take for that $x$. Then, instead of Goldreich's reduction, we apply the derandomization reduction directly.
\end{itemize}

\section{Existence of hard function implies derandomization}
\begin{theorem}[Existence implies derandomization]
    \label{thm:existence-to-derandomization}
    The forward direction of the main theorem (\ref{thm:main}). I.e., (1) implies (2).
\end{theorem}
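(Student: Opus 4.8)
The plan is to follow the Nisan–Wigderson template, instantiated in the targeted (Goldreich-style) and logspace-bounded setting, and to additionally fold in the one-sided search derandomization at the end. First I would build, from the leakage-resilient average-hard function $f$, a targeted pseudorandom generator $G$ that takes the padded input $x$ as a target: on target $x$ of appropriate length, $G(x,\cdot)$ uses the string $f(x) \in \binstr^n$ directly as the ``truth table'' fed into an NW-style design, with the design parameters chosen so that the seed length is $\mathcal O(\log |x|)$ and the generator is logspace-computable given oracle-style access to bits of $f(x)$ (which is fine, since $f$ is logspace and we can recompute bits as needed). The key design choice is that, because we have \emph{average}-case hardness with the $\delta = (\tfrac12 - 1/m^2)n$ parameter baked in, we do not need an error-correcting encoding of $f(x)$ — the small distinguishing advantage translates directly into an algorithm agreeing with $f(x)$ on more than a $\tfrac12 + 1/m^2$ fraction of coordinates, i.e. within Hamming distance $\delta$, which is exactly what the hardness assumption forbids.

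Next I would carry out the distinguisher-to-predictor step in the leakage framework. Suppose toward a contradiction that $G$ fails to fool some logspace distinguisher on infinitely many targets $x$. By Yao's hybrid argument, for each such $x$ there is an index $i$ and a next-bit predictor for the $i$-th output bit of $G(x,\cdot)$ from the earlier bits. The standard NW analysis needs a prefix of the seed and a short table of values of $f(x)$ on the coordinates touched by the overlapping design blocks to be fixed; in the uniform setting this is supplied not by nonuniform advice but by the $\leak$ function, which has access to $f(x)$ (hence to the design-restricted truth table) and whose output length is within the $\ell = n^\varepsilon$ budget — this is where the parameter $\varepsilon = 2\alpha + \alpha^3/5$ must be checked against the number of bits actually leaked (roughly the seed prefix plus $2^{\text{(block overlap)}} \cdot (\text{number of blocks})$). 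Crucially, where Liu–Pass rejection-sample to locate a \emph{good} prefix and verify it, we cannot store a superlogarithmic prefix for verification in logspace; instead $\leak$ samples the prefix (and randomness) once and writes it out unverified, and the $1/n$ success-probability slack in Definition \ref{def:leakage-average-hard} absorbs the fact that only a $\Omega(1/n)$-ish fraction of prefixes are good. Assembling $(A,\leak)$ this way yields an $S$-space $T$-time pair that reconstructs $f(x)$ to within Hamming distance $\delta$ with probability $\ge 1/n$ on infinitely many $x$, contradicting the assumption; hence $G$ fools all logspace distinguishers on all sufficiently long padded targets.

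Finally I would use $G$ to derandomize a $\prBPdL$ search problem. Given the verifier $V$ and finder $F$ for $\RYES,\RNO$, on input $x\in S_R$ the deterministic algorithm $F'$ pads $x$ to form the target, then enumerates all seeds $s$; for each $s$ it computes $\rho = G(x,s)$, runs $y_s = F(x;\rho)$ and then estimates $\P_r[V(x,y_s;r)=1]$ by again enumerating seeds of $G$ (the same generator fools the logspace distinguisher $r \mapsto V(x,y_s;r)$, with $x,y_s$ hard-wired as part of the target/input); $F'$ outputs the first $y_s$ whose estimated acceptance probability is high. Correctness has two parts: (i) since $x\in S_R$, the true finder $F$ outputs some $y$ with $(x,y)\in\RYES$ with probability $\ge 2/3$ over its coins, and $G$ fooling the (logspace) composed distinguisher ``$\rho \mapsto [\,\P_r[V(x,F(x;\rho);r)=1]\ge 2/3\,]$'' guarantees some seed $s$ produces such a $\rho$, so the search does not fail; (ii) any $y_s$ that passes the estimated-acceptance test cannot lie in $\RNO$, because on $\RNO$ pairs $V$ rejects with probability $\ge 2/3$ and $G$ fools that test too, so the estimate would be low — hence $(x,F'(x))\notin\RNO$, and taking the canonical decision-problem case gives $\prBPdL=\prL$. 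Space is $\mathcal O(\log n)$ throughout since seed length is logarithmic, $V$ and $F$ are logspace, and $G$ is logspace with recomputable $f$-bits, composing logspace machines with logarithmic-length intermediate outputs in the usual way. The main obstacle I expect is the bookkeeping in the predictor step: simultaneously (a) keeping the leaked prefix-plus-table within the $n^\varepsilon$ budget with the stated $\varepsilon$, (b) running everything in logspace without ever materializing the superlogarithmic prefix, and (c) making the ``sample once, don't verify'' trick interact correctly with the $1/n$ slack — getting all three constraints to hold simultaneously is the delicate part, and it is exactly why the hardness assumption had to be strengthened in the two ways noted after Definition \ref{def:leakage-average-hard}.
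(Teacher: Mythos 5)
Your proposal is correct and follows essentially the same route as the paper: it decomposes the forward direction into (i) constructing a targeted NW-style PRG from the average-case leakage-resilient hard function, with the $\leak$ function uniformly supplying the prefix and overlap tables and the sample-once-unverified trick absorbed by the $1/n$ slack, and (ii) derandomizing the one-sided search by enumerating seeds, running the finder on each pseudorandom string, and checking candidates with a PRG-derandomized verifier. The only cosmetic difference is that the paper factors the verifier derandomization out as a separate lemma ($\prBPdL = \prL$ via padding) before the search loop, whereas you inline it.
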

\begin{proof}
    Apply lemma \ref{lemma:hard-prg} with $C = 3$ to get the necessary PRG, then apply lemma \ref{lemma:prg-derand} to perform the desired derandomization with the PRG.
\end{proof}

\newcommand{\distinguish}{D_x^m}
\newcommand{\generator}[1][x]{G_{#1}^m}
The following are fairly standard definitions of a distinguisher and pseudorandom generator with the caveat of the additional ``target'' string. For brevity, write $\distinguish(\gamma)$ for $D(1^m, x, \gamma)$ and similarly for $G$.

\begin{definition}[Targeted distinguisher]
    Given a function $G: \set{1}^m \times \binstr^n \times \binstr^d \to \binstr^m$, a targeted distinguisher for $G$ with advantage $\beta > 0$ is a machine $D: \set{1}^m \times \binstr^n \times \binstr^m$ such that for all sufficiently large $m$ and for all $x \in \binstr^n$, we have that
    \begin{equation}
        \label{eqn:distinguisher}
        \abs\Big{\P_{s \sim \U_{d}}\q\big[\distinguish(\generator(s)) = 1] - \P_{\gamma \sim \U_m}\q\big[\distinguish(\gamma) = 1]} \geq \beta
    \end{equation}
    where $\U_k$ is the uniform distribution on $\binstr^k$.

    Throughout, we will refer to targeted distinguishers as ``distinguishers.''
\end{definition}

This definition becomes useful in the context of the following (modified from~\cite{goldreich_2011}\cite{liu_2022}):

\begin{definition}[Uniform targeted pseudorandom generator]
    For parameters $S$, $n$, and $d$ dependent on output length $m$, an $S$-secure uniform $(n, d)$-targeted PRG is a deterministic function $$G: \set{1}^m \times \binstr^n \times \binstr^d \to \binstr^m$$ which satisfies the property that there does not exist a targeted distinguisher with advantage $\beta$ running in space $S$ for any $\beta > 0$.

    Throughout the rest of this article, ``PRG'' and ``pseudorandom generator'' refer to a uniform targeted PRG. $n$ is the ``target length'' of the generator and $d$ is the ``seed length.''
\end{definition}

We show that our hardness assumption implies an $\mathcal{O}(\log m)$-secure $(\poly(m), \mathcal{O}(\log m))$ PRG and then that the existence of an such a PRG implies the desired derandomization.

\subsection{Hard function implies existence of PRG}
First, we prove that a hard function with the given parameters implies the existence of a PRG.

Fix a sufficiently small value of $\alpha > 0$. $\alpha = 1/3$ works.

\begin{lemma}[Hard function implies existence of PRG]
    \label{lemma:hard-prg}
    For any $C$, there exist constants $c_1, c_2$ such that if there exists a logspace computable function $f$ such that $f$ is almost-all-input \\$\q(n^{c_1}, c_2\log n, n^\varepsilon, \q(\frac{1}{2} - \frac{1}{m^2})n)$-leakage resilient average hard with $m = n^\frac{\alpha^3}{5}$ and $\varepsilon = 2\alpha + \frac{\alpha^3}{5}$, then there exists a $C \log m$-secure $(n,\frac{\log n}{\alpha})$ PRG which is logspace computable.
\end{lemma}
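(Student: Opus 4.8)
The plan is to adapt the standard Nisan--Wigderson construction, using the average-case leakage-resilient hard function $f$ in place of a hard-coded truth table. Concretely, on target $x \in \binstr^n$ I would view $f(x) \in \binstr^n$ as (the truth table of) a function on $\log n$ bits, and build the generator $\generator(s)$ by evaluating $f(x)$ at the coordinates picked out by a combinatorial NW design over a universe of size $\log n$ with sets of size $\log m / \alpha$ (so $m = n^{\alpha^3/5}$ is a valid truth-table size and the seed length is $d = (\log n)/\alpha$, matching the statement). The key point to check is that this $\generator$ is computable in space $\mathcal{O}(\log m) = \mathcal{O}(\log n)$: the NW design with the required parameters admits a logspace-uniform construction (a standard Reed--Solomon-based design suffices), $f$ is logspace computable by hypothesis, and composing a logspace evaluation of $f$ with logspace index computations keeps us in logspace by the usual composition lemma.

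Next I would prove security by the contrapositive: suppose some space-$C\log m$ distinguisher $D$ has advantage $\beta > 0$ against $\generator$ for infinitely many $x$. Run Yao's distinguisher-to-predictor transformation: a hybrid argument produces, for each bad $x$, an index $i$ and a fixing of the ``other'' seed coordinates such that the resulting machine predicts the $i$-th output bit of $\generator$ from the earlier bits with advantage $\beta/m$ over random guessing. In the classical proof the fixing and the small table of $f(x)$-values used by the other NW sets are hard-wired non-uniformly; here, exactly as in Liu and Pass, I would have the leakage function $\leak(x, f(x))$ supply them. The table has size roughly $m \cdot 2^{\log m/\alpha} = m \cdot m^{1/\alpha}$ bits, which is $\poly(m) = n^{O(\alpha^3)}$, and together with the index and seed-fixing this stays within the leakage budget $\ell = n^\varepsilon$ with $\varepsilon = 2\alpha + \alpha^3/5$ for small enough $\alpha$; one must bookkeep these exponents carefully, since this is where the precise choice of $\varepsilon$ is forced. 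Crucially, because we cannot verify a superlogarithmic-length seed-fixing in logspace, we do not rejection-sample for a good fixing: we pick the fixing bits at random inside $\leak$, which succeeds with probability at least $1/m$ (an averaging argument over fixings), and this is exactly why the hardness assumption tolerates success probability $1/n < 1/m$ rather than $1/3$.

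From the predictor I would reconstruct an algorithm $A$ that, given $x$ and the leaked string, computes $f(x)$ correctly on a $\frac12 + \Omega(1/m)$ fraction of coordinates, hence within Hamming distance $(\frac12 - \frac1{m^2})n$ for infinitely many $x$ with probability at least (something like) $1/m \cdot \mathrm{poly}(1/m) > 1/n$ — contradicting $(n^{c_1}, c_2\log n, n^\varepsilon, (\frac12 - \frac1{m^2})n)$-average hardness once $c_1, c_2$ are chosen large enough to absorb the time and space overhead of $D$, the hybrid bookkeeping, the design evaluation, and the prediction-to-reconstruction step. The time bound $n^{c_1}$ comes from $D$ running in time $2^{O(\log m)} = \poly(n)$ times the $\poly(n)$ overhead; the space bound $c_2 \log n$ comes from $C \log m + O(\log n) = O(\log n)$. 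I would fix $C = 3$ as in the application and let $c_1, c_2$ be the resulting constants.

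The main obstacle I anticipate is the logspace bookkeeping in the Yao transformation: one normally describes the predictor as ``on input the prefix bits, plug them into the NW generator with the fixed coordinates, and run $D$,'' but in logspace we must ensure that (i) the fixed-coordinate string, though written in the leakage, is accessed in a streaming/indexed way rather than stored, (ii) recomputing each NW-set evaluation on the fly (rather than storing intermediate outputs) does not blow up the space, and (iii) the averaging argument that extracts a good index and a usably-frequent good fixing can itself be carried out within the space budget — or, more precisely, that its *existence* is all we need and the reconstruction algorithm $A$ only has to simulate, not search. Getting these composition and recomputation arguments exactly right, while keeping the leaked table length inside $n^\varepsilon$, is the delicate part; the rest is a faithful logspace transcription of the standard NW analysis.
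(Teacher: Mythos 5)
Your proposal follows essentially the same route as the paper: a targeted Nisan--Wigderson generator with $f(x)$ as the truth table, the leakage function supplying the hybrid index, seed fixing, and intersection tables, a single random choice of that fixing in place of rejection sampling (justified by the $1/n$ success threshold in the hardness definition), and a predictor agreeing with $f(x)$ on a $\frac12+\Omega(1/m)$ fraction of coordinates, all with the time/space overhead absorbed into $c_1,c_2$. One correction: your design parameters are inverted---the design must live over the seed universe $[d]$ with $d=(\log n)/\alpha$ and have sets of size $\log n$ (so each set indexes the full $n$-bit truth table $f(x)$, which is needed for the final Hamming-distance contradiction) with pairwise intersections at most $2\alpha^2 d=2\alpha\log n$; a universe of size $\log n$ with sets of size $(\log m)/\alpha$ would neither give seed length $(\log n)/\alpha$ nor let the predictor approximate all of $f(x)$. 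Relatedly, the leaked tables have size $m\cdot 2^{2\alpha^2 d}=m^{1+10/\alpha^2}=n^{2\alpha+\alpha^3/5}$, i.e.\ exactly $n^\varepsilon$ rather than $n^{O(\alpha^3)}$; the $2\alpha$ term dominates and is precisely what forces the value of $\varepsilon$ in the statement.
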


As is standard for a Nisan-Wigderson generator, we will require designs for the construction.
\begin{definition}[Combinatorial design]
    A design is a collection of (potentially large) sets with a bounded pairwise intersection size. More precisely:

    For any natural numbers $(d,r,s)$ such that $d>r>s$, a $(d,r,s)$-design of size $m$ is a collection $\I = \set{I_1,\dots,I_m}$ of subsets of $[d]$ such that for each $j \in [m]$, $|I_j| = r$ and for each $k\in [m]$ such that $k \neq j$, $|I_j \cap I_k| \leq s$.
\end{definition}
Because we're working in logspace, we will need the following lemma:
\begin{lemma}[Creating designs]
    There is a deterministic algorithm we call $\GenDesign(d, \alpha)$ which produces a $(d, \alpha d, 2\alpha^2d)$ design of size $2^{\frac{\alpha^4d}{5}}$ in space $S_\GenDesign = \mathcal{O}(d)$.
\end{lemma}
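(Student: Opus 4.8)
The plan is to adapt the classical recipe for building Nisan--Wigderson designs from an error-correcting code, choosing the code so that each symbol can be computed in small space. Recall the recipe: if $\mathcal{C} \subseteq \Sigma^q$ is a code in which any two distinct codewords agree in at most $s'$ coordinates, then, identifying the universe with $[q] \times \Sigma$ and setting $I_c = \set{(a, c_a) : a \in [q]}$ for each codeword $c$, the collection $\set{I_c : c \in \mathcal{C}}$ is a $(q \cdot |\Sigma|,\, q,\, s')$-design of size $|\mathcal{C}|$, since $I_c \cap I_{c'}$ is exactly the set of positions where $c$ and $c'$ agree. (Over Reed--Solomon codes this is the standard NW construction, with $s' = k - 1$.) To land on the target parameters $(d,\, \alpha d,\, 2\alpha^2 d)$ with $2^{\alpha^4 d / 5}$ sets, I want alphabet size $|\Sigma| = 1/\alpha$, block length $q = \alpha d$ (so that $q \cdot |\Sigma| = d$), pairwise agreement at most $2\alpha q = 2\alpha^2 d$ (i.e.\ relative distance $\ge 1 - 2\alpha$), and at least $m := 2^{\alpha^4 d / 5}$ codewords.

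Accordingly, I would take $\alpha$ so that $t := 1/\alpha$ is a prime power (e.g.\ $t = 3$ for $\alpha = \onethird$), set $\Sigma = [t]$ and $q = \alpha d$, and instantiate $\mathcal{C}$ as an explicit code over $[t]$ of block length $q$, relative distance $\ge 1 - 2\alpha$, and at least $m$ codewords. The required rate is $\log_t(m)/q = \alpha^3 / \big(5 \log_2(1/\alpha)\big)$, a constant far below what is achievable at relative distance $1 - 2\alpha$ over an alphabet of size $1/\alpha$ --- note $1 - 2\alpha$ is strictly under the Plotkin point $1 - \alpha$ for this alphabet, so positive-rate codes exist there, and for small $\alpha$ the slack is ample (this is essentially where the exponent's $1/5$, and hence $\varepsilon = 2\alpha + \alpha^3/5$, comes from). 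A concatenated code --- Reed--Solomon outer code over a field of size $\poly(q)$ composed with a fixed small inner code over $[t]$ --- meets these parameters and, crucially, admits a symbol-by-symbol encoder: producing one output symbol from the message index is a Horner evaluation of a low-degree polynomial followed by a constant-size inner encoding, costing $\mathcal{O}(\log q) = \mathcal{O}(\log d)$ space.

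Then $\GenDesign(d, \alpha)$ is simply: enumerate the $m$ messages (indices $j \in [m]$, each $\Theta(d)$ bits), and for each output $I_j = \set{(a, c^{(j)}_a) : a \in [q]}$ with $c^{(j)} = \mathcal{C}(j)$, recomputing symbols $c^{(j)}_a$ on demand. By construction $|I_j| = q = \alpha d$, and for $j \ne j'$ the codewords are distinct, so $|I_j \cap I_{j'}|$ equals their number of agreements, at most $2\alpha q = 2\alpha^2 d$. The working memory is an index ($\mathcal{O}(d)$ bits), a position $a \in [q]$ ($\mathcal{O}(\log d)$ bits), and the $\mathcal{O}(\log d)$-space encoder, for a total of $\mathcal{O}(d)$.

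I expect the main obstacle to be the middle step: pinning down a single explicit code that simultaneously has alphabet $\approx 1/\alpha$, relative distance $\ge 1 - 2\alpha$, at least $2^{\alpha^4 d/5}$ codewords, and an $\mathcal{O}(\log d)$-space symbol-wise encoder, and verifying that the rate--distance bookkeeping genuinely closes with the stated constants. A secondary, purely mechanical point is making the universe have size exactly $d$ and every set size exactly $\alpha d$: if $1/\alpha$ is not a prime power or $\alpha d$ is not an integer one rounds $t$ and $q$, then pads the universe with unused elements and deletes surplus elements from each set, which only shrinks pairwise intersections and hence preserves the design property.
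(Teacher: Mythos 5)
Your construction is sound in outline, but it is a genuinely different route from the paper's: the paper gives no construction at all, instead citing the appendix of Klivans and van Melkebeek~\cite{klivans_2002}, where exactly these parameters --- a $(d,\alpha d, 2\alpha^2 d)$-design of size $2^{\alpha^4 d/5}$ in space $\mathcal{O}(d)$ --- are established via the greedy Nisan--Wigderson existence argument made constructive by exhaustive search within the $\mathcal{O}(d)$ space budget. Your code-based alternative (universe $[q]\times\Sigma$ with $q=\alpha d$, $|\Sigma|=1/\alpha$, sets given by graphs of codewords, intersections bounded by agreements) is the right adaptation of the Reed--Solomon design to the linear-set-size regime, and the parameter bookkeeping does close: you need $\alpha^3/5$ bits per symbol at relative distance $1-2\alpha$ over alphabet $1/\alpha$, while Gilbert--Varshamov already permits roughly $c\,\alpha$ bits per symbol at that distance (and $1-2\alpha$ sits strictly below the Plotkin point $1-\alpha$), so there is ample slack even at $\alpha=\onethird$. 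What your route buys is a \emph{strongly explicit} design --- each $I_j$ computable directly from its index $j$ with only $\mathcal{O}(\log d)$ workspace beyond the $\Theta(d)$-bit index --- which the greedy construction does not give; what the citation buys is that the constants (in particular the $1/5$ in the exponent, which the rest of the paper's parameter choices inherit) come for free rather than having to be re-derived from a concatenated-code rate--distance tradeoff. The one step you have left genuinely open is the one you flag yourself: exhibiting a single explicit code with an $\mathcal{O}(\log d)$-space symbol-wise encoder meeting these parameters. This is standard (Reed--Solomon outer code concatenated with a small inner code, with Horner evaluation for single symbols) but does require care that constructing or storing the inner code also fits in $\mathcal{O}(d)$ space --- e.g., by keeping the outer field size near $d/\log d$ or by using a second level of concatenation so the innermost code has constant size. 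Since the lemma as used in the paper only needs $\mathcal{O}(d)$ space and not strong explicitness, the citation is the shorter path, but your construction would serve as a self-contained replacement once that instantiation is pinned down.
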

\begin{proof}
    Proven by Klivans and Melkebeek in the appendix of~\cite{klivans_2002}.
\end{proof}

\begin{proof}[Outline of proof of \ref{lemma:hard-prg}]
As presented in the overview, the construction of the PRG is a fairly standard Nisan-Wigderson generator~\cite{nisan_1994} construction, with the caveats that it is a targeted PRG and that instead of the truth table of a hard-coded non-uniformly hard function, we use the output of our $f$ on the target string. The key change is that the leakage function then replaces the hard-coded tables in the search-to-decision reduction. 

We show that this can be done in logspace and that the hard-coded tables fit in the inverse polynomial leakage bound $n^\varepsilon$. In order to compute the tables in logspace, the probability of generating ``good tables'' needs to be low because it's impossible to verify the tables. However, a probability high enough to contradict leakage-resilient hardness, i.e., over $\inv{n}$, is achievable.

Assuming a distinguisher that defeats the PRG, the attacker function is then able to use the tables from the leakage function together with the distinguisher to perform the search-to-decision reduction with enough accuracy to violate the average-case hardness constraint.
\end{proof}

\begin{proof}[Proof of \ref{lemma:hard-prg}]
With an appropriate choice of $c_1, c_2$, we can construct a $C \log m$-secure PRG for any constant $C \geq 0$.

We construct a targeted PRG $G: \set{1}^m \times \binstr^n \times \binstr^d \to \binstr^m$, where $n = m^\frac{5}{\alpha^3}$ and $d = \frac{\log n}{\alpha}$.

The algorithm is similar to that in time-bounded case~\cite{liu_2022}: it is simply a Nisan Wigderson generator~\cite{nisan_1994}, with the hard-coded truth table replaced with one outputted by the hard function.

On input $1^m, x, s$, it proceeds as follows:

\begin{enumerate}
    \item Compute $z = f(x)$. Define $h(i) = z_i$.
    \item Compute $\I = \GenDesign(d, \alpha)$. Note that $r = d \alpha = \log n$ and $2^{\frac{\alpha^4d}{5}} \geq m$ due to our choice of $n$ and $d$.

    \item Output
    $$G(1^m,x,s) = h(s_{I_1}) \cdots h(s_{I_m})$$
\end{enumerate}

Suppose there exists a distinguisher $D$ computable in space $C \log m$ with advantage $\beta > 0$. We will use this to approximate $f$ with enough of an advantage to violate the average hardness assumption by a standard hybrid argument (lemma 3.15 in~\cite{liu_2022}, proposition 7.16 in~\cite{vadhan_2012}, theorem 10.12 in~\cite{arora_2009}).

We can remove the absolute value from the distinguisher definition \eqref{eqn:distinguisher} by noting that there exists $b \in \set{0,1}$ such that

$$\P_{s \sim \U_{d}}[\distinguish(\generator(s)) = b] - \P_{\gamma \sim \U_m}[\distinguish(\gamma) = b] \geq \beta$$

since flipping $b$ flips the sign.

Then, for every $j \in \set{0,\dots, m}$, define

$$H_j = (h(s_{I_1}),\dots,h(s_{I_j}),w_{j+1},\dots,w_m)$$

(with $h$ defined as in the generator) where $s \sim \U_d$ and each $w_k \sim U_1$ (for $j + 1 \leq k \leq m$). Notice that $H_0 = \U_m$ and $H_m = G(1^m, x, \U_d)$. Therefore it follows that

\begin{multline*}
    \inv{m} \sum_{j \in 1}^m \q\Big(\P_{y,w}[\distinguish(H_j) = b] - \P_{y,w}[\distinguish(H_{j-1}) = b]) 
    \\=\inv{m}\q\Big(\P_{y,w}[\distinguish(H_m) = b] - \P_{y,w}[\distinguish(\U_0) = b])
    \geq \frac{\beta}{m}
\end{multline*}

Considering $j$ as a random variable distributed uniformly over $[m]$, we get that

$$\E_{j \in [m]}\Big[\P_{y,w}[\distinguish(H_j) = b] - \P_{y,w}[\distinguish(H_{j-1}) = b]\Big] \geq \frac{\beta}{m}$$

Since $\P_{y,w}[\distinguish(H_j) = b] - \P_{y,w}[\distinguish(H_{j-1}) = b]$ is upper bounded by 1, by an averaging argument, with probability at least $\frac{\beta}{2m}$ over the choice of $j \gets [m]$, $y_{[d]\setminus I_j} \gets \binstr^{d-r}$, and $w_{[m]\setminus[j]} \gets \binstr^{m - j}$, the strings $j$, $y_{[d]\setminus I_j}$, and $w_{[m]\setminus[j]}$ will satisfy:

\begin{equation}
    \label{eq:good_choice}
    \P_{y_{I_j}\sim\U_r}[\distinguish(H_j) = b] - \P_{(y_{I_j},w_j)\sim \U_{r + 1}}[D(H_{j-1})=b] \geq \frac{\beta}{2m}
\end{equation}

Suppose we have a choice $j$, $y_{[d]\setminus I_j}$, and $w_{[m]\setminus[j]}$ satisfying the above equation. By Yao's prediction vs. indistinguishability theorem~\cite{yao_1982}, it holds that

$$\P_{(y_{I_j},w_j)\sim \U_{r + 1}}[\distinguish(H_{j-1}) \oplus b \oplus w_j = h(y_{I_j})] \geq \frac{1}{2} + \frac{\beta}{2m}$$

There then must exist a choice of $w_j$ such that

\begin{equation}
    \label{eq:good_approximation}
    \P_{y_{I_j}\sim \U_r}[\distinguish(H_{j-1}) \oplus b \oplus w_j = h(y_{I_j})] \geq \frac{1}{2} + \frac{\beta}{2m}
\end{equation}

Observe that in order to compute $H_{j-1}$ as a function of $y_{I_j}$, one does not need to know the entire truth table of $h$. Instead, note that the overlap between $I_j$ and $I_i$, $i < j$, is at most $2\alpha^2d = 2\alpha r$. Thus, as a function of $y_{I_j}$, $h(y_{I_i})$ can only take $2^{2\alpha r}$ possible values. Thus, to compute all of $H_{j-1}$, one needs $j2^{2\alpha r} \leq m 2^{2\alpha r}$ bits of the truth table of $h$.

Then, given a distinguisher $D$ with advantage $\beta$, we can approximate $f$ with the functions $(A, \leak)$ described below.

On input $x, z = f(x)$, $\leak$ proceeds as follows (denote $h(i) = z_i$):
\begin{enumerate}
    \item Evaluates $\I = \GenDesign(d,\alpha)$, with $d$ chosen the same way as the PRG.
    \item Chooses a random $j \in [m]$ and a random $y_{[d]\setminus I_j}$, storing them on its work tape.
    \item Writes $j$ and $y_{[d]\setminus I_j}$ to the output tape.
    \item For $0 < i < j$ and for each value of $y$ obtained by ranging over possible values of $y_{I_j}$ (since $y_{[d] \setminus I_j}$ is already fixed), $\leak$ writes $h(y_{I_i})$ to the output tape.
    \item Writes random bits $b, w_j$ and string $w_{[m] - [j]}$ to the output tape.
\end{enumerate}

Then, on input $x, w = \leak(x, f(x))$, $A$ proceeds as follows:
\begin{enumerate}
    \item $A$ evaluates $\I = \GenDesign(d,\alpha)$, with $d$ chosen the same way as the PRG.
    \item Defines $b$, $j$, $y_{[d] \setminus I_j}$, $w_j$, and $w_{[m]\setminus[j]}$ to be those read from its input tape.
    \item For each $y_{I_j} = k$ in $[n]$ yielding a full string $y$,
    \begin{enumerate}
        \item Computes $H_{j-1}$, with bit $i < j$ found by looking in position $k$ in the $i$th table output by $\leak$ in step 4 and $i \geq j$ found by taking the appropriate bit of $w$, also output by $\leak$.
        \item Computes $\distinguish(H_{j-1}) \oplus b \oplus w_j$ and writes it to the output tape.
    \end{enumerate}
\end{enumerate}

We will show that the result is a pair which runs in the necessary time and space bounds and, with high probability, is a good enough approximation of $f(x)$ to achieve contradiction.

\proofsubparagraph{Space-bound}
$\leak$ step 1 requires space $S_\GenDesign = \mathcal{O}(d) = O\q(\frac{\log n}{\alpha})$. Since it is locally computed by future steps, this becomes $\mathcal{O}(S_\GenDesign)$. The second step requires space $d + \log m$, and uses the local computation of step 1. The third step requires space $\log (\log m + d)$, since it is simply iteration and printing. The fourth step uses space $\log m + d$ again since it iterates over indices $i$ and values of $y$. For each value, it simply performs a lookup into the input, which requires at most space equal to the size of the index, bounded by $d + 1$. Finally, step $5$ requires space $\log m$, since it needs to keep track of the number of bits it needs to write.

Thus, the overall space used by $\leak$ is $\mathcal{O}(d) + \mathcal{O}(\log n) + \mathcal{O}(\log m) = \mathcal{O}(\log n + \log \alpha) = \mathcal{O}(\log n)$.

For $A$, step 1 again requires space $\mathcal{O}(S_\GenDesign)$. Step 2 is simply a definition and thus requires no space. Step 3 requires $\log n$ bits of iteration overhead plus the space used by 3.(a) and 3.(b). 3.(a) is simply indexing into the input based on $k$ and so requires space $\log n$. This is the locally recomputed by step 3.(b), so 3.(a) requires space $\mathcal{O}(\log n)$. Finally, 3.(b) requires space $C \log m$.

Thus, $A$ and $\leak$ run in space $c_2 \log n$ for a constant $c_2$ dependent only on $C$.

\proofsubparagraph{Time-bound}
$\GenDesign$ is a deterministic logspace algorithm, and thus runs in time $T_\GenDesign = \poly(n)$. In order compose with it, we need to run it as many times as a bit of it is needed, so it becomes a multiplicative factor on the rest of the run time.

For $\leak$, for steps 2 and 3, choosing random strings requires time linear in their lengths, since one simply reads from the random tape and copying them to the tape is also linear. Thus, the time complexity of this step is $\mathcal{O}(\log n + \log m)$. Since $j \leq m$ and $|y_{I_j}| = \log n$, there are $\leq n \cdot m$ iterations in step 4, and each step is simply indexing into a string of length $n$, so step 4 has time complexity $\mathcal{O}(n^2 \cdot m)$. Step 5 is again linear, requiring $\mathcal{O}(m)$ time to write $\leq m$ bits to the tape.

Thus, $\leak$ runs in time $n^{c_{1,1}}$ for some constant $c_{1,1}$.

Denote by $T_D$ the time complexity of the distinguisher. Note that $T_D = \poly(n) \cdot \mathcal{O}(2^{C \log m}) = \poly(n)$.

For $A$, step 1 is polynomial as above and step 2 is a no-op. The iteration in step 3 is over $n$ terms. Step 3.(a) is $\mathcal{O}(m\cdot n)$ since it's simply a look up into the output of $\leak$ which has sub-linear length (see contradiction of hardness below) for each of $m$ bits. This is done each time the distinguisher in step 3.(b) needs a bit, so step 3's overall time complexity is $\mathcal{O}(T_D \cdot m \cdot n^2)$.

Thus, $A$ runs in time $n^{c_{1,2}}$ for some $c_{1,2}$.

Setting $c_1 = \max(c_{1,1},c_{1,2})$ gives the desired time bound.

\proofsubparagraph{Contradiction of hardness}
First, note that $\leak$ outputs at most $\log m + d - r + m + m 2^{2\alpha^2 d}$ bits. The $\log m$ and $d - r$ terms are logarithmic and thus fall within any polynomial bound. The constraints on $d$ and $\alpha$, chosen such that $d \alpha = \log n$ and $2^{\frac{\alpha^4d}{5}} = m$ tell us that for the $m 2^{2\alpha^2 d}$ term, we have that 

$$2\alpha^2d = \frac{10}{\alpha^2} \log m \quad \text{so} \quad 2^{2\alpha^2 d} = m^{\frac{10}{\alpha^2}}$$

Thus, $m 2^{2\alpha^2 d} = m^{\frac{10}{\alpha^2} + 1}$, which is

$$m^{\frac{10}{\alpha^2} + 1} = \q(n^{\frac{\alpha^3}{5}})^{\frac{10}{\alpha^2} + 1} = n^{2\alpha + \frac{\alpha^3}{5}}$$

Thus, if $\alpha$ is small enough (e.g., $\frac{1}{3}$), then for $\varepsilon = 2\alpha + \frac{\alpha^3}{5} < 1$, we have that $|\leak(x,f(x))| \leq n^\varepsilon$.

Observe that \eqref{eq:good_choice} is satisfied with probability $\frac{\beta}{2m}$ by $\leak$'s random choice of $j$, $y_{[d]\setminus I_j}$, and $w_{[m]\setminus[j]}$. Furthermore, with probability $1/4$, $\leak$'s random choices of $b$ and $w_j$ are the correct choices such that \eqref{eq:good_approximation} holds. Thus, with probability $\frac{\beta}{8m}$, we have that \eqref{eq:good_approximation} holds for our choices. 

Consider the case where it does. Note that the randomness in \eqref{eq:good_approximation} is over the argument to $h$, which is the index into $f(x)$. Then, when $A$ evaluates $\distinguish(H_{j-1}) \oplus b \oplus w_j$ for each index, we conclude that at least a $\frac{1}{2} + \frac{\beta}{2m}$ fraction of bits output by $A$ agree with $f(x)$. We thus have that

$$\P[d_H(A(x, \leak(x, f(x))), f(x)) \leq \q(\frac{1}{2} - \frac{\beta}{2m})n] \geq \frac{\beta}{8m}$$

However, note that $\frac{\beta}{8m} > \inv{n}$ for sufficiently large $n$ (since $m$ is polynomially smaller than $n$) and $\inv{2} - \frac{\beta}{2m} < \inv{2} - \inv{m^2}$ for any constant $\beta$ and sufficiently large $n$. This contradicts the hardness assumption of $f$, so such a distinguisher can't exist and we must have that $G$ is a $C \log m$-secure PRG.
\end{proof}

\subsection{Derandomization of search problems using the PRG}
Now, we prove that the PRG is sufficient. This formulation is similar to work by Goldreich~\cite{goldreich_2011} which was seen in Liu and Pass's~\cite{liu_2022} reproduction.
\begin{lemma}[Derandomization of search problems using the PRG]
    \label{lemma:prg-derand}
    The existence of a $3\log m$-secure $(\poly(m), \mathcal{O}(\log m))$, logspace-computable PRG implies (2) of main theorem (\ref{thm:main}).
\end{lemma}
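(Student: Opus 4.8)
The plan is to mimic the usual Nisan--Wigderson derandomization, now for search problems and in logspace. Fix a $\prBPdL$ search problem with logspace finder $F$ and verifier $V$ running in time $t(n)$ and space $O(\log n)$ on inputs of length $n$ (with candidates of length $\le t(n)$). The first step is amplification: replace $V$ by the logspace machine $V'$ that runs $V$ a polylogarithmic number of times on disjoint blocks of the two-way read-only random tape and outputs the majority, pushing its error below any fixed inverse polynomial, in particular below $2^{-d}$ for the seed length $d$ used later; and replace $F$ by the machine $F'$ that repeatedly runs $F$, re-verifies each candidate with $V'$ (recomputing bits of the candidate on demand---which is exactly what keeps the composition in $O(\log n)$ space), and outputs the first candidate $V'$ accepts. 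Then for every $x\in S_R$ we get $\P_\rho[(x,F'(x;\rho))\in\RNO]\le 2^{-\Omega(n)}$ while $\P_\rho[(x,F'(x;\rho))\in\RYES]\ge 1-2^{-\Omega(n)}$ (landing on a ``don't-care'' $y\notin\RYES\cup\RNO$ is harmless for us).

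Next I would set up the targeting. Fix a polynomial padding $x\mapsto\hat x=\pad(x)\in\binstr^{N(n)}$ with a total logspace left inverse $\unpad$, where $N(n)$ is the target length of the PRG whose output length $m(n)$ is a polynomial in $n$ chosen large enough to exceed the randomness used by $F'$ and $V'$ and to make their $O(\log n)$ space bound at most $3\log m(n)$; since the target length is polynomial in the output length, these constraints are mutually consistent. Let $M(1^m,\hat x,\gamma)$ be the logspace machine that sets $x=\unpad(\hat x)$, computes $y=F'(x;\gamma^{(0)})$ on one block of $\gamma$ and outputs $V'(x,y;\gamma^{(1)})$ on another. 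Then $\P_\gamma[M=1]\ge 2/3$ whenever $\unpad(\hat x)\in S_R$, and $\P_\gamma[M=1 \text{ and } (x,F'(x;\gamma^{(0)}))\in\RNO]\le 2^{-\Omega(n)}$ always. The derandomized finder $F''$ on input $x$ then forms $\hat x$, enumerates all $2^{O(\log m)}=\poly(n)$ seeds $s$, runs $M(1^m,\hat x,G(1^m,\hat x,s))$ while streaming the pseudorandom string bit-by-bit by re-running $G$, and at the first accepting $s$ outputs $F'(x;G(1^m,\hat x,s)^{(0)})$ (outputting $0^{t(n)}$ if no seed accepts, with the finitely many short inputs hardcoded). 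This is deterministic and runs in space $O(\log n)$.

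To prove correctness I must show that for large $x\in S_R$ some seed makes $M$ accept and that the emitted $y$ avoids $\RNO$. If no seed made $M$ accept, then $M$---a space-$3\log m$ machine reading $\hat x$ both as ``input'' and, with the generator, as target---has $\P_s[M(1^m,\hat x,G(1^m,\hat x,s))=1]=0$ while $\P_\gamma[M(1^m,\hat x,\gamma)=1]\ge 2/3$, an advantage of $\ge 2/3$ on target $\hat x$; since (after hardcoding) such a failure would recur at arbitrarily large lengths, this yields a targeted distinguisher against $G$ with constant advantage, contradicting $3\log m$-security---and it is precisely the targeting (both $M$ and $G$ receive $\hat x$) that lets these offending inputs be pointed at uniformly at each length. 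If instead the emitted $y$ satisfied $(x,y)\in\RNO$, then $V'$ on $(x,y)$ with uniform randomness would reject with probability $\ge 1-2^{-\Omega(n)}$, hence---again a space-$3\log m$ test with target $\hat x$---still reject with probability $\ge 1-2^{-\Omega(n)}-\beta$ on $G(1^m,\hat x,s^{*})$ for every $\beta>0$, so reject on that very string, contradicting that $M$ (which contains this check) accepted on it. Statement~(2) of Theorem~\ref{thm:main} follows, and applied to the search problem pairing yes/no-instances of a $\prBPdL$ decision problem with the bit $1$ it gives $\prBPdL=\prL$.

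The step I expect to be the main obstacle is the rigorous use of targeted-PRG security: one has to verify that a derandomization failure recurring at infinitely many lengths genuinely assembles into a \emph{single} uniform space-$3\log m$ machine distinguishing $G$ from uniform on the required targets for all large output lengths---i.e., that targeting really converts ``fails somewhere, infinitely often'' into the ``for all large $m$, all targets'' form the distinguisher definition demands---while simultaneously keeping the whole composition (amplified $F$ and $V$, the streamed PRG, the seed enumeration, the bit-by-bit emission of $y$) inside $O(\log n)$ space. The $\RNO$-versus-don't-care bookkeeping is the secondary subtlety, and it is handled entirely by choosing the amplification error above to beat $2^{-d}$.
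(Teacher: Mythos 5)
Your overall architecture (enumerate seeds, feed the pseudorandom string to the finder, keep the first candidate the verifier accepts, and turn any failure into a targeted distinguisher) matches the paper's, and your first correctness claim---that some seed must make the test accept for $x\in S_R$---is essentially the paper's argument. The genuine gap is in your second correctness claim, the $\RNO$-avoidance step. You keep the verifier randomized (merely amplified) and run it on a pseudorandom string; to conclude that the emitted $y$ is not a no-instance you need $\P_s[V'(x,y;G(1^m,\hat x,s))=1]<2^{-d}$, so that no single seed can accept a no-instance. Starting from $\P_\gamma[V'(x,y;\gamma)=1]\le 2^{-\Omega(n)}$, this requires the generator to fool the test $V'(x,y;\cdot)$ on this particular target to within additive error below $2^{-d}-2^{-\Omega(n)}$, i.e.\ an inverse-polynomial distinguishing gap. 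But the security notion you are given only excludes distinguishers achieving advantage $\beta$ for a fixed constant $\beta>0$; a machine whose gap is only about $2^{-d}=1/\poly(m)$ is not such a distinguisher, so your ``reject on that very string'' step does not follow from $3\log m$-security. Amplifying $V$ cannot repair this: however small you push the error of $V'$ under uniform randomness, the slack you need from the PRG is still $\approx 2^{-d}$, which constant-advantage security never supplies.

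The paper avoids this entirely by first derandomizing the verifier: since $(\RYES,\RNO)$ viewed as a decision problem is in $\prBPdL$, the PRG (via Lemma~\ref{lemma:prg-promise-derand}) yields a \emph{deterministic} logspace $V$ with $(x,y)\in\RNO\Rightarrow V(x,y)=0$ and $(x,y)\in\RYES\Rightarrow V(x,y)=1$. Acceptance of a candidate then immediately certifies $(x,y)\notin\RNO$ with no further appeal to PRG security, and the only distinguisher needed in the search phase is the constant-advantage one showing an accepting seed exists (using that the deterministic $V$ accepts every $\RYES$ pair the finder produces with probability $\ge 2/3$). You should restructure your proof this way; the remaining ingredients of your write-up (padding to normalize the time/space exponents, hard-coding finitely many exceptions, the streamed logspace composition) then go through as in the paper.
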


As is typical for derandomization using a PRG, we will need the ability to pad strings. We will treat functions $\pad(x, k)$ and $\unpad(x')$ as having ``negligible'' complexity. The exact definition and proof of this can be seen in lemma \ref{lemma:pad}, but we effectively use an expanded alphabet (encoded as pairs of bits) to add an ``pad'' character.

We separate our proof of \ref{lemma:prg-derand} into two parts: derandomizing $\prBPdL$ using a PRG and then derandomizing $\prBPdL$ search problems using both $\prBPdL = \prL$ and the PRG.

\begin{lemma}[Derandomization of promise problems]
    \label{lemma:prg-promise-derand}
    The existence of a $3\log m$-secure $(\poly(m), \mathcal{O}(\log m))$, logspace-computable PRG implies $\prBPdL = \prL$.
\end{lemma}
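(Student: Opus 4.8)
The plan is to adapt the standard PRG-based derandomization argument — replace the random tape of the promise-$\prBPdL$ machine with the output of $G$, iterate deterministically over all $\poly(m)$ seeds, and take a majority vote — while being careful that (i) the target string fed to $G$ is a padded version of the input $x$, so that the argument is ``uniform'', and (ii) every step stays in logspace.

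\medskip

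\emph{Setup and the algorithm.} Let $(\LYES, \LNO)$ be a $\prBPdL$ promise problem with a logspace PPT machine $M$ using $m = m(|x|) = \poly(|x|)$ random bits and space $O(\log |x|)$, with soundness/completeness error $\le 1/3$ (first amplify to, say, error $\le 1/8$ by $O(1)$-fold independent repetition, still using $\poly(|x|)$ random bits and logspace). On input $x$ of length $n$, set $m$ to be the (polynomial) random-bit count, pad $x$ to a target string $x' = \pad(x, n')$ of the appropriate length $n'$ so that the PRG $G \colon \{1\}^m \times \binstr^{n'} \times \binstr^d \to \binstr^m$ has seed length $d = O(\log m) = O(\log n)$ (using $\pad$/$\unpad$ of negligible complexity, lemma~\ref{lemma:pad}). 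The deterministic logspace algorithm $M'$ does the following: for each seed $s \in \binstr^d$, compute $G(1^m, x', s)$ and run $M(x)$ using $G(1^m,x',s)$ as its random bits, tallying how many seeds lead to acceptance; accept iff this count exceeds $2^{d}/2$.

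\medskip

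\emph{Correctness via the targeted distinguisher.} Suppose for contradiction that $M'$ is wrong on infinitely many inputs $x$. Define $D(1^m, x', \gamma)$ to be the machine that recovers $x = \unpad(x')$ and outputs $M(x; \gamma)$ (i.e. $M$ run with random string $\gamma$); note $D$ runs in space $O(\log m)$, within the $3\log m$ security bound of $G$. For $x \in \LYES$, $\P_{\gamma \sim \U_m}[D(1^m,x',\gamma) = 1] \ge 7/8$, and if $M'$ rejects $x$ then $\P_{s \sim \U_d}[D(1^m, x', G(1^m,x',s)) = 1] \le 1/2$, giving advantage $\ge 3/8$; symmetrically for $x \in \LNO$ with the accept probability. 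Since this happens for infinitely many $x$ — and crucially the target $x'$ is available to $D$ as input, so we do not need to ``know'' which $x$ — the definition of a targeted distinguisher (which quantifies over all sufficiently large $m$ and \emph{all} $x \in \binstr^{n'}$) is met with some fixed $\beta = 3/8 > 0$, contradicting the $3\log m$-security of $G$. Hence $M'$ decides $(\LYES,\LNO)$ correctly on all but finitely many inputs, and the finite exceptions are patched into the logspace machine by table lookup; so $(\LYES,\LNO) \in \prL$.

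\medskip

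\emph{Space accounting and the main obstacle.} The space bound is the delicate part: $M'$ must, for each of the $2^d = \poly(n)$ seeds, simulate $M$ while feeding it bits of $G(1^m,x',s)$ on demand. We cannot store $G$'s $m$-bit output, so each time $M$ queries its $i$-th random bit we recompute the $i$-th bit of $G(1^m,x',s)$ from scratch; by lemma~\ref{lemma:hard-prg} $G$ is logspace-computable, and by the composition lemma for logspace (recomputation costs only a multiplicative time factor, additive $O(\log n)$ space) this keeps $M'$ in space $O(\log n)$. The seed counter, the accept tally, and $M$'s own work tape are each $O(\log n)$; the only subtlety is that $M$ has a \emph{two-way} read-only random tape, so when $M$ moves its random-tape head we must be able to produce an arbitrary requested bit of $G$'s output, which is exactly what logspace-computability of $G$ (as a multi-output function, indexable bit by bit) provides. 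I expect this bookkeeping — verifying that the composition stays within $O(\log n)$ space despite the two-way random head and the need to recompute both $G$ and $\pad$ on every access — to be the main (if routine) obstacle; the rest is the textbook PRG-fooling-a-distinguisher argument with the target string doing the work of uniformity.
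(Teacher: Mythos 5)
Your overall strategy matches the paper's (iterate over seeds, majority vote, turn a failure on infinitely many inputs into a targeted distinguisher $D(1^m,x',\gamma)=M(\unpad(x');\gamma)$), but there is a genuine gap in the step where you claim the distinguisher fits the security bound. You write that $D$ ``runs in space $O(\log m)$, within the $3\log m$ security bound of $G$,'' but $O(\log m)$ hides a constant that depends on the machine $M$: if $M$ uses space $b\log|x|+\mathcal{O}(1)$ and $m=|x|^{a}$ random bits, then $D$ uses roughly $(b/a)\log m$ space, which can far exceed $3\log m$. Since the generator is only guaranteed secure against distinguishers running in space $3\log m$, the contradiction does not go through as stated. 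This is exactly the issue the paper's proof is structured around: it first derandomizes only machines with time $\mathcal{O}(k)$ and space $\log k+\mathcal{O}(1)$ (so that $D$ provably runs in space $\log m+\mathcal{O}(1)\le 2\log m$), and then handles a general $\prBPdL$ machine with time $\mathcal{O}(k^a)$ and space $b\log k+\mathcal{O}(1)$ by padding the \emph{input} to length $k^{c}$ with $c=\max(a,b)$, which normalizes the time to linear and the space constant to $1$ relative to the padded length. Your single padding step (to the PRG's target length $n'$) serves only to make the target the right size and does not perform this normalization, so the constant-factor problem remains.

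The fix is to add the paper's two-stage padding argument (or, equivalently, to restate the hypothesis with a security parameter $C\log m$ for a $C$ depending on $M$, which is not what the lemma assumes). Your other deviations are harmless: the $\mathcal{O}(1)$-fold amplification to error $1/8$ is unnecessary (the paper gets advantage $\ge 1/6$ directly from the $2/3$ vs.\ $1/2$ gap), and your on-demand recomputation of $G$'s output bits for the two-way random tape is the same composition bookkeeping the paper does. Note also that, like the paper, you only obtain a distinguisher that succeeds on infinitely many targets $x'$ rather than all $x'$ as the distinguisher definition literally demands; the paper glosses over this in the same way, so I do not count it against you here.
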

\begin{proof}[Proof sketch]
    This is an adaptation of the time-bounded derandomization construction as seen in Goldreich's and Liu and Pass's work. We have included it in detail in appendix \ref{appendix:lemma:prg-derand} to show that there are no hiccups related to logarithmic space. (In particular, the unpadded case becomes time $\mathcal{O}(k)$ and space $\log k + \mathcal{O}(1)$, and the padding exponent becomes the maximum of the exponent on time and the multiplier on space.)
\end{proof}

\begin{proof}[Proof of \ref{lemma:prg-derand}]
    Suppose $G$ is an $3 \log m$-secure $(n,d)$ PRG with parameters $n = m^\theta$ for some $\theta$ and $d = \mathcal{O}(\log m)$. By lemma \ref{lemma:prg-promise-derand}, we have that $\prBPdL = \prL$.
    
    Suppose $\RYES, \RNO$ is a $\prBPdL$ search problem with verifier $V(x,y;\omega)$ and finder $F(x;\gamma)$ where $|\gamma| \leq t(|x|)$. Observe that we can derandomize $V$ to $V(x,y)$ (since it is a $\prBPdL$ decision problem) such that $(x,y) \in \RNO \implies V(x,y) = 0$ and $(x,y) \in \RYES \implies V(x,y) = 1$.
    
    First we will perform this derandomization for search problems where $F(x; \gamma)$ has time complexity $t(k) = \mathcal{O}(k)$ and the random procedure $M(x;\gamma) = V(x, F(x; \gamma))$ has space complexity $\log k + \mathcal{O}(1)$. Define $m = t(k)$ as the maximum amount of randomness used.
    
    Consider the following deterministic procedure $F'$: On input $x$,
    \begin{enumerate}
        \item For each seed $s \in \binstr^d$,
        \begin{itemize}
            \item Compute $b \gets V(x, F(x; \generator[x'](s)))$, where $x' = \pad(x,n)$.
            \item If $b = 1$, output $F(x; \generator[x'](s))$.
        \end{itemize}
        \item Output the empty string.
    \end{enumerate}
    
    Note that this procedure operates in logspace since iterating seeds requires space $d = \mathcal{O}(\log m) = \mathcal{O}(\log k)$ and step 3.(a) and 3.(b) are logspace compositions (a more detailed treatment can be see in the proof of \ref{lemma:prg-promise-derand}).
    
    Observe also that if this procedure terminates within the loop, it outputs a string $y$ such that $V(x,y) = 1$ and thus such that $(x,y) \notin \RNO$. It remains to show that this procedure terminates within the loop on all but finitely many inputs (since we can hard-code any finite number of misbehaving cases).
    
    Suppose the contrary. Then there exist infinitely many inputs for which, 
    
    $$\forall s \in \binstr^d ~~~V(x, F(x; G_{x'}^m(s))) = 0$$ 
    
    Further, by definition of a finder, $\P_\gamma[(x, F(x; \gamma)) \in \RYES] \geq \twothirds$. Thus,
    
    $$\P_s[V(x, F(x; G_{x'}^m(s))) = 1] = 0 \qand \P_\gamma[(x, F(x; \gamma)) \in \RYES] \geq \twothirds$$
    
    This tells us that $D(1^m, x', r) = M(\unpad(x); r) = M(x;r)$ is a distinguisher with advantage $\beta \geq \twothirds$ which works for infinitely many values of $x'$ and runs in space $2 \log m$ (since it is the truncation machine composed with $M$; again, a more detailed treatment can be seen in the proof of \ref{lemma:prg-promise-derand}). This contradicts the security of the generator.
    
    Thus, $(x, F'(x)) \notin \RNO$ for all but finitely many $x$ (which can be hard-coded).
    
    Now consider an arbitrary $\prBPdL$ search problem, $(\RYES, \RNO)$ with derandomized verifier $V$ and finder $F(x;\gamma)$ such that the time complexity of $F$ is $\mathcal{O}(k^a)$ and the space complexity of $V(x, F(x; \gamma))$ is $b \log k + \mathcal{O}(1)$. Again, let $c = \max(a, b)$ and define $x' = \pad(x,k^c)$. We create $\RYES_\pad = \set{(x',y): (x,y) \in \RYES}$ and $\RNO_\pad = \set{(x',y): (x,y) \in \RNO}$.
    
    Then $V_\pad(x',y) = V(\unpad(x'),y)$, $F_\pad(x') = F(\unpad(x'))$ satisfy $$M_\pad(x';r) = V_\pad(x',F_\pad(x';r))$$ being a machine with space complexity $\log k' + \mathcal{O}(1)$ and $F_\unpad$ having time complexity $\mathcal{O}(k')$. Thus, we have deterministic algorithm $F'_\pad(x')$ such that $(x',F'_\pad(x')) \notin \RNO_\pad$.
    
    We can then define $F'$ as the function which on input $x$ computes $F'_\pad(\pad(x,k^c))$. Observe that if $(x,F'(x)) \in \RNO$ then $(x',F'_\pad(x')) = (x',F'(x)) \in \RNO_\pad$, which contradicts $F'_\pad$ being a derandomized finder for $\RNO_\pad$.
    
    Thus, we have achieved the desired partial derandomization of $\prBPdL$ search problems.
\end{proof}

\section{Derandomization implies existence of hard function}
\begin{theorem}[Derandomization implies existence]
    \label{thm:derandomization-to-existence}
    Suppose that for any $\prBPdL$ search problem $(\RYES, \RNO)$, we can create a deterministic algorithm $F'$ such that for any $x \in S_R$, we have that $(x,F'(x)) \notin \RNO$.

    We will show that for any constants $c_1,c_2$ there exists a logspace computable function $f$ such that $f$ is almost-all-input $(T, S, \ell, d) = \q(n^{c_1}, c_2\log n, n^\varepsilon, \q(\frac{1}{2} - \frac{1}{m^2})n)$-leakage-resilient average hard where $\varepsilon = 2\alpha + \frac{\alpha^3}{5}$ and $m = n^{\frac{\alpha^3}{5}}$ for $\alpha = \frac{1}{3}$.
\end{theorem}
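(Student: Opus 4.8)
The plan is to mirror the backward direction of Liu and Pass, but replacing Goldreich's search-to-decision reduction with the one-sided search derandomization granted by hypothesis (2). First I would define, for each input length $n$, a $\prBPdL$ search problem whose non-no solutions for a given $x$ are exactly the "good" outputs $y = f(x)$: a pair $(x,y)$ with $|y| = |x| = n$ is a \emph{yes}-instance if every pair $(A,\leak)$ of space-$S$, time-$T$ machines with program descriptions of length at most $2\log n$ and $|\leak(x,y)| \le \ell(n)$ satisfies $\P[d_H(A(x,\leak(x,y)),y) < d(n)] \le \tfrac{1}{12n}$ (say), and a \emph{no}-instance if some such $(A,\leak)$ achieves $\P[d_H(A(x,\leak(x,y)),y) < d(n)] \ge \tfrac1n$. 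The promise gap between $\tfrac1{12n}$ and $\tfrac1n$ is what lets a logspace verifier distinguish the two cases by brute-force enumeration.

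Next I would verify this is a genuine $\prBPdL$ search problem. For the verifier: on input $(x,y)$, enumerate all $\le n^{O(1)}$ machine pairs of description length $\le 2\log n$, simulate each (truncating to the time/space bounds), estimate $\P[d_H(A(x,\leak(x,y)),y) < d(n)]$ by repetition, and accept iff no pair crosses the threshold; this runs in logspace because each simulation uses $O(\log n)$ space and we only need a counter and the current machine index. (The threshold can be placed, e.g., at $\tfrac1{2n}$, with enough samples that the estimate is accurate with high probability, using a logspace-computable approximate counter.) For the finder: ignore $x$ and output a uniformly random $y \in \binstr^n$ — I would argue that with high probability a random $y$ is a yes-instance, since for any \emph{fixed} pair $(A,\leak)$, the probability over $y$ that $A$ reconstructs $y$ to within $d(n) = (\tfrac12 - \tfrac1{m^2})n$ Hamming distance is exponentially small (a volume/counting bound on Hamming balls of radius just below $n/2$, combined with the $\ell(n) = n^\varepsilon \ll n$ leakage budget), and there are only $\poly(n)$ such pairs, so a union bound leaves most $y$ good; also $S_R$ is all of $\binstr^n$.

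Then I would apply hypothesis (2) to this search problem to obtain a deterministic logspace $F'$ with $(x,F'(x)) \notin \RNO$ for every $x$, and define $f(x) = F'(x)$ (padding so that $F'$ is total and length-preserving). Since $F'$ is logspace-computable, $f$ is logspace-computable. For almost-all-input leakage-resilient average hardness: fix any space-$S$, time-$T$ pair $(A,\leak)$; for all sufficiently long $x$ its description length is $\le 2\log|x|$, so $(x, f(x)) \notin \RNO$ forces that $(A,\leak)$ does \emph{not} witness a no-instance at $(x,f(x))$, i.e. $\P[d_H(A(x,\leak(x,f(x))),f(x)) < d(n)] < \tfrac1n$, which is exactly the hardness conclusion.

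The main obstacle I expect is getting the logspace verifier right: simulating up to $n^{O(1)}$ candidate attacker/leakage pairs, each within a $c_2\log n$ space bound, while estimating acceptance probabilities — this requires care that the universal simulation overhead stays $O(\log n)$ (bounding the attacker's own space by a constant times $\log n$ and reusing space across the enumeration), that approximate counting for the probability estimates is logspace, and that the promise gap is wide enough to absorb sampling error; a secondary subtlety is choosing the leakage length and Hamming-distance thresholds in the search problem so that they match exactly the parameters $\ell = n^\varepsilon$ and $d = (\tfrac12 - \tfrac1{m^2})n$ of the target hardness assumption, and confirming the random finder succeeds with probability $\ge \tfrac23$ given those thresholds.
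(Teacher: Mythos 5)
Your proposal matches the paper's proof essentially step for step: the same search problem whose non-no instances are exactly the valid hard outputs (with only cosmetic differences in the description-length bound and the yes/no probability thresholds), the same brute-force enumeration verifier with time/space-truncated simulations and sampling against an intermediate threshold, the same random-string finder justified by a Hamming-ball volume bound plus a union bound over the polynomially many attacker pairs (the paper's ``Random is hard'' lemma), and the same final application of the one-sided search derandomization to define $f = F'$. The approach and all key ideas coincide with the paper's; there are no gaps.
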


The ``Random is hard'' lemma is similar to claim 1 in section 3.1 of Liu and Pass's paper~\cite{liu_2022} but needs a much stronger claim than in the time-bounded setting, corresponding to the use of an average-case hardness assumption.

\subsection{Random is hard}
\begin{lemma}[Random is hard]
    \label{lemma:random_is_hard}
    For \emph{any} probabilistic algorithms $(A, \leak)$ and for all $n \in \N$, $x \in \binstr^n$, it holds that
    
    $$\P[|\leak(x,r)| \leq \ell \land d_H(A(x, \leak(x, r)), r) < d] \geq \inv{2n}$$

    with probability at most $n \cdot 2^{n\cdot (H(d/n) - 1) + \ell + \mathcal{O}(1)}$ (where $H(p)$ is the binary entropy function $H(p) = p \log \inv{p} + (1 - p) \log \inv{1 - p}$) over $r \sim \U_n$.
\end{lemma}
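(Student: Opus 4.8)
The plan is to argue by a counting (union bound) argument over the randomness $r \sim \mathcal{U}_n$. Fix arbitrary $(A, \leak)$ and a fixed input $x$. Call an outcome $r$ ``bad'' if, with probability at least $\inv{2n}$ over the internal randomness of $A$ and $\leak$, we have both $|\leak(x,r)| \leq \ell$ and $d_H(A(x,\leak(x,r)), r) < d$. I want to show the set $B$ of bad $r$ has size at most $n \cdot 2^{n(H(d/n) - 1) + \ell + \mathcal{O}(1)}$, which gives the claim since $\P_{r}[r \in B] = |B|/2^n$.

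The key step is a fixing-of-randomness / covering argument. For each bad $r$, there is some choice of internal randomness $\rho$ for which $\leak$ produces a string $w = w(r,\rho)$ with $|w| \leq \ell$ and some choice of internal randomness for which $A(x, w)$ lands within Hamming distance $d$ of $r$; by a simple averaging/union bound (paying the factor $1/(2n)$ split as two events each of probability $\geq 1/(4n) - $ type bound, or more cleanly just fixing both randomness strings), we can fix deterministic values so that $A(x,w)$ outputs a \emph{single} string $a(w)$ with $d_H(a(w), r) < d$. The point is that $a(w)$ depends only on the leaked string $w$ (and $x$, which is fixed), so every bad $r$ lies in the Hamming ball of radius $d$ around one of the strings $a(w)$ as $w$ ranges over $\{0,1\}^{\leq \ell}$. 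Hence $B \subseteq \bigcup_{|w| \leq \ell} \mathrm{Ball}(a(w), d)$, and $|B| \leq 2^{\ell+1} \cdot \mathrm{vol}(d) \leq 2^{\ell + \mathcal{O}(1)} \cdot 2^{n H(d/n)} \cdot n$ using the standard volume bound $\sum_{i < d}\binom{n}{i} \leq n\binom{n}{d} \leq n \cdot 2^{n H(d/n)}$ (valid for $d \leq n/2$, which holds here). Rearranging the $2^n$ in the denominator gives the stated $n \cdot 2^{n(H(d/n)-1) + \ell + \mathcal{O}(1)}$; the extra factor of $n$ and the $\mathcal{O}(1)$ absorb the $1/(2n)$-probability bookkeeping and the volume-bound slack.

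The one place requiring care is handling the probabilistic nature of $A$ and $\leak$ cleanly — in particular the fact that $w$ itself is a random variable, so a fixed $r$ might be ``covered'' by different leaked strings on different runs. The fix is exactly the averaging step: since $r$ is bad, there exists at least one $w^*$ (depending on $r$) such that conditioned on $\leak(x,r) = w^*$, the probability $A$ outputs something $d$-close to $r$ is at least $\inv{2n} \cdot 2^{-\ell}$-ish when we union over the at most $2^{\ell+1}$ possible values of $w^*$ — wait, more simply: by averaging over the (at most $2^{\ell+1}$) values of $w$, some particular value $w^*$ has $\P[\leak = w^* \wedge d_H(A(x,w^*), r) < d] \geq \inv{2n}\cdot 2^{-(\ell+1)}$, and then averaging over $A$'s randomness, some deterministic output $a$ of $A(x, w^*)$ satisfies $d_H(a, r) < d$. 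Crucially $a$ is determined by $w^*$ alone (for fixed $x$), so the number of distinct centers is at most the number of distinct $w^*$, namely $2^{\ell+1}$. I expect this bookkeeping — making sure the center depends only on the leaked string and correctly tracking the constant/polynomial factors — to be the main (though still routine) obstacle; the rest is the textbook volume bound.
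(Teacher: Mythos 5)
Your overall strategy---union over the at most $2^{\ell+1}$ possible leaked strings, then cover the bad $r$'s by Hamming balls of radius $d$ and apply the entropy volume bound---is the same as the paper's, but there is a genuine gap at exactly the step you flag as the delicate one. You claim that after averaging over $A$'s internal randomness, the witnessing output $a$ ``is determined by $w^*$ alone,'' so that each leaked string contributes a single ball $\mathrm{Ball}(a(w),d)$. That is false: $A$ is probabilistic, so $A(x,w^*)$ is a distribution whose support may contain exponentially many strings, and two different bad $r$'s sharing the same leaked string $w^*$ may be witnessed by different outputs in that support. The random tape you fix for $A$ depends on $r$, so the center is really $a(w^*,r)$, and the covering $B\subseteq\bigcup_{|w|\le\ell}\mathrm{Ball}(a(w),d)$ does not follow. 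Without a bound on how many distinct centers a single $w$ can generate, the argument proves nothing; note also that the bad $r$'s need not be close to any individual \emph{high-probability} output of $A(x,w^*)$, only to have their ball receive total mass $\ge \inv{2n}$, so you cannot repair this by restricting to heavy outputs.

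The paper closes this hole with a probability-mass argument rather than a randomness-fixing argument: for a fixed $w$, every bad $r$ satisfies $\P[d_H(A(x,w),r)<d]\ge\inv{2n}$, i.e.\ the output distribution of $A(x,w)$ places mass at least $\inv{2n}$ on $\mathrm{Ball}(r,d)$; since the balls around distinct bad $r$'s can jointly intercept only bounded mass (equivalently, summing $\P[r\in\mathrm{Ball}(A(x,w),d)]$ over bad $r$ is at most the ball volume), each $w$ contributes at most $2n$ such balls, hence at most $2n\cdot 2^{nH(d/n)+\mathcal{O}(1)}$ bad strings. This is where the leading factor of $n$ in the lemma's bound actually comes from---not from slack in the volume estimate, as in your accounting. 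Your preliminary averaging over the $2^{\ell+1}$ values of $w$ is also unnecessary and lossy: it degrades the per-$w$ success probability to $\inv{2n}2^{-(\ell+1)}$, which would inflate the count of balls per $w$ and the final bound by another factor of $2^{\ell}$. The paper instead first fixes the best random tape of $\leak$ (per $r$), making the leaked string a deterministic function of $r$, so that each bad $r$ is charged to exactly one $w$ at the full probability $\inv{2n}$.
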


\begin{proof}
    Consider any $n \in \N$, $x \in \binstr^n$, and any probabilistic algorithm $A$. We will show that for any \emph{deterministic} function $\leak'$ that outputs at most $\ell$ bits,
    \begin{equation}
        \label{eq:random_is_hard}
        \P_{r \sim \U_n}[\P[d_H(A(x, \leak'(x, r)), r) < d] \geq \inv{2n}] \leq n \cdot 2^{n\cdot (H(d/n) - 1) + \ell + \mathcal{O}(1)}
    \end{equation}
    The claim for any probabilistic algorithm $\leak$ follows immediately by letting $\leak'$ be $\leak$ with the best random tape fixed.
 To show \eqref{eq:random_is_hard}, consider the set of ``bad'' $r$'s defined as
    $$B = \set{r \in \binstr^n: \exists w \in \binstr^\ell~\mathrm{s.t.}~\P[d_H(A(x, \leak(x, r)), r) < d] \geq \inv{2n}}$$
    For any $r$, if there exists a $\leak'$ such that $|\leak'(x, r)| \leq \ell$ and $\P[d_H(A(x, \leak'(x, r)), r) < d] \geq \inv{2n}$, then $r \in B$. Thus, the probability that $A(x,\leak(x,r))$ successfully approximates $r$ is bounded by the probability that $r \in B$.

    We now bound $B$. Observe that for a given choice of $w \in \binstr^{\leq \ell}$, there exists $2n$ Hamming balls (which, in the worst case, do not overlap) of radius $d$ contained in $B$, since there are at most $2n$ Hamming balls which $g(x, w)$ can occupy with probability $\geq \inv{2n}$.

    Finally, note that a Hamming ball of radius $d$ on strings of length $n$ has volume
    $$2^{nH(d/n) - \inv{2}\log n + \mathcal{O}(1)} \leq 2^{nH(d/n) + \mathcal{O}(1)}. \quad \text{Thus,} \quad |B| \leq 2^{\ell + 1} \cdot 2n \cdot 2^{nH(d/n) + \mathcal{O}(1)}$$
    We can roll the $2$ in $2n$ into the $\mathcal{O}(1)$ in the final exponent, along with the $+1$ in $\ell + 1$.

    Thus, the probability that a randomly chosen $r$ satisfies the condition in \eqref{eq:random_is_hard} is at most
    $$\frac{|B|}{2^n} \leq n \cdot 2^{n(H(d/n) - 1) + \ell + \mathcal{O}(1)}$$
    \vspace{-1em}
\end{proof}

\subsection{Construction of hard function}

\begin{proof}[Proof of theorem \ref{thm:derandomization-to-existence}]
We can construct a $\prBPdL$ search problem to represent computing our desired hard problem. This search problem is similar to the corresponding search problem from~\cite{liu_2022} with the changes corresponding directly to the differences between the time-bounded worst-case version of leakage-resilient hardness and the space-bounded average-case version presented in this work.
\proofsubparagraph{The search problem}
Note that, since we consider only efficiently and uniformly computable $(A, \leak)$, a given pair has some constant length. We can thus consider only pairs $(A, \leak)$ with descriptions of length at most $\log n$. For any given attacker pair, then, we will only skip it under our search problem in a finite number of cases, which is covered by the almost-all-input hardness.

Let $\RYES$ be a binary relation such that $(x, r) \in \RYES$ if both of the following are true:
\begin{enumerate}
    \item $|x| = |r|$.
    \item For all probabilistic $(A, \leak)$ such that $|A|, |\leak| \leq \log n$ (where $|A|$ (resp. $\leak$) refers to the length of the description of the program $A$ in some arbitrary encoding scheme), it holds that
    \begin{equation}
        \label{eq:yes_membership}
        \P[|\leak'(x,r)| \leq \ell \land d_H(A'(x, \leak'(x, r)), r) < d] < \inv{2n}
    \end{equation}

    where $A'$ and $\leak'$ are time-space-truncated versions of $A$ and $\leak$ which are only executed until they run for $n^{c_1}$ steps or try to use more than $c_2 \log n$ space (where $c_1,c_2$ are from theorem \ref{thm:derandomization-to-existence}).
\end{enumerate}

Let $\RNO$ be a binary relation such that $(x,r) \in \RNO$ if $|x| \neq |r|$ or for \emph{at least one pair} of $(A, \leak)$ (within $\log n$ size bound), \eqref{eq:yes_membership} doesn't hold with a $\inv{n}$ lower bound instead of an $\inv{2n}$ upper bound.

We then show this is a $\prBPdL$ search problem by finding a verifier and finder.

\proofsubparagraph{Verifier}
On input $(x,r)$, the verifier immediately rejects if $|x| \neq |r|$. Otherwise, it enumerates all probabilistic machines $(A, \leak)$ such that $|A|, |\leak| \leq \log n$. For each one, $V$ estimates the value
$$p_{A,\leak} = \P[|\leak'(x,r)| \leq \ell \land d_H(A'(x,\leak'(x,r)), r) < d]$$

by simulating $A'(x,\leak'(x,r))$ polynomially many times and recording whether the Hamming distance is at most $d$. If during this process, for any $(A, \leak)$, we have that the sample estimate $\hat{p}_{A,\leak}$ exceeds $\frac{3}{4n}$, $V$ outputs $0$ and terminates. Otherwise, the value $p_{A,\leak}$ is discarded for this attacker pair and estimated for the next attacker pair. If we iterate every attacker pair without terminating, we output $1$ and terminate. By the Chernoff bound and the union bound, $V$ will accept with high probability if $(x,r) \in \RYES$ and reject with high probability if $(x,r) \in \RNO$.

This step is exactly where two-way randomness becomes necessary. We can compute $A'(x, \leak'(x,r))$ in logspace since with two-way randomness of $\BPdL$, we can compose functions. The rest is logspace, since computing Hamming distance can be done in a read-once manner by tallying number of agreeing bits, which requires $\log n$ bits, and keeping a tally of successes vs. total count represents the success probability in $\log \mathsf{num\_trials}$ space. Since only a polynomial number of trials is necessary, this is all logspace.

We can compute it in polynomial time, since we only need polynomially many samples and each of $A'$ and $\leak'$ is time truncated to a polynomial time-bound.

\proofsubparagraph{Finder}
On input $x$, the finder outputs a random string of the same length. Observe that for any fixed $x \in \binstr^n$, by the ``random is hard'' lemma (\ref{lemma:random_is_hard}), and a union bound over choices of $(A, \leak)$, we conclude that $F(x)$ outputs an invalid witness with probability at most
$$n^2 \cdot n \cdot 2^{n(H(d/n) - 1) + \ell + \mathcal{O}(1)}$$
When $\ell = n^\varepsilon = n^{2\alpha + \frac{\alpha^3}{5}}$ and $\frac{d}{n} = \frac{1}{2} - \frac{1}{m^2}$ with $m = n^{\frac{\alpha^3}{5}}$ and $\alpha = \frac{1}{3}$, this converges to 0, and thus satisfies the search problem requirement of error probability less than $\onethird$ (except for perhaps some finite number of small samples where answers can be hard-coded).

\proofsubparagraph{Computing the hard function}
By the derandomization assumption, there exists a deterministic, logspace function $F'$ such that $(x,F'(x)) \notin \RNO$. 

Note that $F'(x) \notin \RNO$, so $|F'(x)| = |x|$ and for all machines $(A, \leak)$ which run within the time and space bounds, \eqref{eq:yes_membership} (with $\inv{2n}$ replaced with $\inv{n}$) holds for all but the finitely many strings too short to include the descriptions of $A, \leak$.

Therefore, by construction, $f = F'$ is a logspace computable $(n^{c_1}, c_2 \log n,\ell,d)$-leakage-resilient average hard function as desired. 
\end{proof}

\section{Discussion}

We suspect it is possible to reduce the derandomization part of our assumption to $\prBPdL = \prL$, such that we do not explicitly need the separate partial derandomization of $\prBPdL$ search problems. However, we were not able to identify such a reduction within the duration of this research.

\bibliography{bib}
\clearpage
\appendix

\section{Proof of lemmas for derandomization using PRG}\label{appendix:lemma:prg-derand}
\begin{lemma}[Efficient padding]
    \label{lemma:pad}
    There exists a function $\pad(x,k)$ (note that $k$ is not passed in unary) which has three critical properties:
    \begin{enumerate}
        \item $|\pad(x,k)| \geq k$.
        \item $\pad$ has space complexity equal to $\mathcal{O}(|k|)$ (where $|k| = \log k$ is the bits to write down the numeric value of $k$).
        \item The inverse $\unpad$ of $\pad$ (which produces $x$ from $\pad(x,k)$) has online space complexity $\mathcal{O}(1)$ and online time complexity $\mathcal{O}(|x|)$ (both independent of $k$), even if $k$ is not known.
    \end{enumerate}
    Furthermore, if $\unpad$ is applied to the contents of an input tape directly, it can transparently provide access to $x$ with $\mathcal{O}(1)$ space overhead, even with two-way seeks.
\end{lemma}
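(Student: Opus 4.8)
The plan is to realize padding by moving to a three-symbol alphabet---$0$, $1$, and a fresh symbol $\star$---each symbol encoded as a pair of bits (say $0 \mapsto 00$, $1 \mapsto 01$, $\star \mapsto 10$), so that the output of $\pad$ is still a genuine bitstring while conceptually carrying a distinguished padding character. Concretely, $\pad(x,k)$ first writes the pair-encoding of each bit of $x$, then writes a single $\star$ (the pair $10$) as a delimiter, then writes $\star$ symbols until the total length is at least $k$. Since $x$ is read-only and we only ever append, the only quantity that must be held in memory is a counter comparing the number of symbols written so far against $k$; that counter needs $\mathcal{O}(\log k) = \mathcal{O}(|k|)$ bits, which gives property~2. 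Property~1 is immediate from the construction (we write at least $\lceil k/2\rceil$ symbols, i.e.\ at least $k$ bits, and can pad to exactly meet the bound). The delimiter is not strictly needed but makes $\unpad$ trivially online.

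For $\unpad$, the algorithm reads the input two bits at a time: on seeing $00$ it emits $0$, on seeing $01$ it emits $1$, and on seeing $10$ (the first $\star$) it halts. This is a streaming pass that keeps only a constant-size buffer (at most one pending bit, to detect pair boundaries) and no counter at all, so it runs in $\mathcal{O}(1)$ space and $\mathcal{O}(|x|)$ time, independent of $k$ --- this is property~3. Crucially, $k$ is never inspected by $\unpad$, which is what lets it be composed obliviously with machines that received a padded input.

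For the ``furthermore'' clause, I would observe that $\unpad$ is not merely computable in $\mathcal{O}(1)$ space but is \emph{position-local}: the $j$th bit of $x$ lives at bit positions $2j$ and $2j+1$ of $\pad(x,k)$, so a machine wanting two-way random access to $x$ can translate an index $j$ into the index $2j$ on the padded tape on the fly, using only $\mathcal{O}(1)$ extra cells beyond those already needed to store $j$. Thus any logspace machine originally operating on $x$ with two-way input seeks can instead operate on $\pad(x,k)$ through this index-doubling wrapper with only $\mathcal{O}(1)$ space overhead, and the padding symbols are never mistaken for data because the pair $10$ is reserved.

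The main obstacle --- really the only subtlety --- is bookkeeping the alphabet encoding consistently so that (i) the doubled-index access used in the ``furthermore'' clause lines up exactly with the pair-encoding used by $\pad$, and (ii) the $\mathcal{O}(1)$-space claim for $\unpad$ genuinely survives the need to track pair-parity on the tape; both are routine once the encoding is fixed, and neither interacts with $k$, which is the property the derandomization lemmas actually exploit.
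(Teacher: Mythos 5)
Your proposal is correct and takes essentially the same approach as the paper: a two-bit-per-symbol expanded alphabet with a reserved pad/delimiter pattern, a $\mathcal{O}(\log k)$ counter for property 2, a constant-space streaming pair decoder for property 3, and index-doubling for the transparent two-way tape wrapper. The only difference is the cosmetic choice of which bit patterns encode data versus padding.
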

\begin{proof}
    We will explicitly construct $\pad$, its generic inverse, and its tape-emulator inverse.
    $\pad$ proceeds as follows: on input $k, x$, $\pad$ initializes a $\log k$ bit counter $C$ to $k$, then it traverses $x$, writing $1x_i$ for each $i \leq |x|$, setting $C$ to $k - 2|x|$ in the process. It then pads the end with $C$ $0$'s, if $C > 0$.
    
    The first two properties follow from the construction.
    
    The third property is demonstrated by explicitly providing the algorithm for $\unpad(x')$ where $x' = \pad(x,k)$ for any $k$. $\unpad$ simply prints each odd (indexed from 0) bit of $x'$ until the preceding even bit is $0$ or it runs out of bits
    
    This algorithm reads at most $2|x| + 1$ bits from the tape in a read-once fashion and it only needs to keep track of the parity, so it has the desired property.
    
    Finally, observe that we can wrap the input tape by simply doubling all moves and output a blank when the bit the left of the head is a 0.
\end{proof}

\begin{proof}[Proof of \ref{lemma:prg-promise-derand}]
    Suppose $G$ is a $3 \log m$-secure $(n,d)$ PRG with parameters $n = m^\theta$ for some $\theta$ and $d = \mathcal{O}(\log m)$.
    
    First, we will show that we can derandomize $\prBPdL$ decision problems which can be solved in probabilistic time $\mathcal{O}(k)$ and space $\log k + \mathcal{O}(1)$ for inputs of length $k$. Suppose that for some promise problem $(\LYES, \LNO)$, $M(x;\gamma)$ is a machine which runs in probabilistic (over $\gamma$) time bounded by $t(k) = \mathcal{O}(k)$ for a logspace computable function $t$ and space $s(k) = \log k + \mathcal{O}(1)$ and has the properties that
    $$\forall x \in \LYES,~\P_\gamma[M(x;\gamma) = 1] \geq \twothirds \qand \forall x \in \LNO,~\P_\gamma[M(x;\gamma) = 1] \leq \onethird$$
    
    Using $G$, we will construct a deterministic machine $M'$ such that $M'$ runs in space $\mathcal{O}(\log k)$ and $\forall x \in \LYES,~M(x) = 1$ and $\forall x \in \LNO,~M(x) = 0$.
    
    We need up to $t(k)$ bits of randomness, so define $m = t(k)$ and $n = m^\theta$. Since our targeted generator requires a target string of length $n$, define $x' = \pad(x, n)$. Since $n = t(k)^\theta$ can be computed in logspace, $x'$ is logspace computable by the efficient padding lemma (\ref{lemma:pad}). Observe that we can treat $M(x;\gamma)$ as a logspace (in the first argument) deterministic function of $x$ and $\gamma$. Since $M$ runs in time $t(k) = \mathcal{O}(k)$, this means it can read at most $t(k)$ bits of $\gamma$, which means it is a deterministic logspace function of both its arguments together.
    
    Thus, $M'$ can proceed as follows:
    \begin{enumerate}
        \item Initialize a $d$-bit counter $C$.
        \item For every seed $s \in \binstr^{d}$,
        \begin{enumerate}
            \item Compute $b \gets M(x;\generator[x'](s))$.
            \item If $b = 1$, increment $C$.
        \end{enumerate}
        \item Accept iff $C > \frac{1}{2}2^d$, otherwise reject.
    \end{enumerate}
    
    Observe that the currently considered seed and counter both occupy space $d = \mathcal{O}(\log m) = \mathcal{O}(\log t(k)) = \mathcal{O}(\log k)$. Step 2.(a) is composition of deterministic logspace functions, which can be done in logspace. The comparison and increments can all be done in logspace.
    
    Thus, this procedure runs in space $\mathcal{O}(\log k$).
    
    It remains to show that this must correctly derandomize all sufficiently long inputs $x$ (if there is some finite set of $x$'s which this does not derandomize correctly, we say that $M'$ simply hard-codes the correct answers to those $x$'s).
    
    Suppose there are infinitely many values of $x$ for which this procedure produces the wrong output. Then, the procedure $D(1^m, x', r) = M(\unpad(x'); r) = M(x;r)$ serves as a distinguisher with advantage $\beta \geq \frac{1}{6}$, since we know that
    
    $$\P_{\gamma \sim \U_m}[M(x;\gamma) = b] \geq \twothirds \qand \P_{s \sim \U_d}[M(x;\generator[x'](s)) = b] \leq \inv{2}$$
    
    where $b$ is the correct output for $x$.
    
    Furthermore, $M(x; r)$ runs in space $\log k + \mathcal{O}(1)$ and thus the machine $D$ runs in space $\log k + \mathcal{O}(1)$ (since $\unpad$ applied to the input tape runs in space $\mathcal{O}(1)$), which is at most space $\log m + \mathcal{O}(1) \leq 2\log m$, which contradicts the security of $G$. Thus, $M'$ is a derandomized version of $M$ (the value of $M'$ on $x \notin \LYES \cup \LNO$ is unspecified).
    
    We can then use this to derandomize any $\prBPdL$ problem via a padding argument. Given a $\prBPdL$ problem $L = (\LYES, \LNO)$ solvable in time $\mathcal{O}(k^a)$ and space $b \log k + \mathcal{O}(1)$, let $c = \max(a, b)$ and define $L_\pad = (\LYES_\pad, \LNO_\pad)$ such that $x' \in \LYES_\pad$ iff $\exists x \in \LYES$ such that $x' = \pad(x, k^c)$ and similarly for $\LNO_\pad$. Then, $L_\pad$ can be solved in time $\mathcal{O}(k')$ and space $\log k' + \mathcal{O}(1)$ by using the tape variant of $\unpad$ and the verifier and finder for $L$.
    
    Thus, we can derandomize $L_\pad$. But if we do so, we can also construct a deterministic logspace machine which, on an instance $x \in \LYES \cup \LNO$, first pads $x$ to $x'$, then runs the deterministic solver for $L_\pad$ and outputs its result.
    
    Thus, we conclude that if such a generator exists, $\prBPdL = \prL$.
\end{proof}

\end{document}